 \newtheorem{thm}{Theorem}
 \newtheorem{lemma}{Lemma}
 \newtheorem{identity}{Identity}
\newtheorem{define}{Definition}
\newtheorem{cor}{Corrolary}
\newcommand{\ket}[1]{\vert #1\rangle}
\newcommand{\e}{{\rm e}}
\theoremstyle{remark}
\newtheorem*{rmk}{Remark}
\newtheorem{prpty}{Property}
\begin{document}

\title{Quantum algorithm for time-dependent Hamiltonian simulation\\by permutation expansion}
\author{Yi-Hsiang Chen}
\affiliation{Information Sciences Institute, University of Southern California, Marina del Rey, CA 90292, USA}
\affiliation{Department of Physics and Astronomy, and Center for Quantum Information Science \& Technology,University of Southern California, Los Angeles, California 90089, USA}
\author{Amir Kalev}
\affiliation{Information Sciences Institute, University of Southern California, Arlington, VA 22203, USA}
\author{Itay Hen}
\affiliation{Information Sciences Institute, University of Southern California, Marina del Rey, CA 90292, USA}
\affiliation{Department of Physics and Astronomy, and Center for Quantum Information Science \& Technology,University of Southern California, Los Angeles, California 90089, USA}

\begin{abstract}
\noindent We present a quantum algorithm for the dynamical simulation of time-dependent Hamiltonians. Our method involves expanding the interaction-picture Hamiltonian as a sum of generalized permutations, which leads to an integral-free Dyson series of the time-evolution operator. Under this representation, we perform a quantum simulation for the time-evolution operator  by means of the linear combination of unitaries technique. We optimize the time steps of the evolution based on the  Hamiltonian's dynamical characteristics, leading to a gate count that scales with an $L^1$-norm-like scaling with respect only to the norm of the interaction Hamiltonian, rather than that of the total Hamiltonian. 
We demonstrate that the cost of the algorithm is independent of the Hamiltonian's frequencies, implying its advantage for systems with highly oscillating components, and for time-decaying systems the cost does not scale with the total evolution time asymptotically. In addition, our algorithm retains the near optimal $\log(1/\epsilon)/\log\log(1/\epsilon)$ scaling with simulation error $\epsilon$.   
\end{abstract}

\maketitle

\section{Introduction}
The problem of simulating quantum systems, whether it is to study their dynamics, or to infer their salient equilibrium properties, was the original motivation for quantum
computers~\cite{Feynman:QC} and remains one of their major potential applications~\cite{Reiher7555,PhysRevX.8.011044}. Classical
algorithms for this problem are known to be grossly inefficient. Nonetheless, a significant fraction of the world's computing power
today is spent on solving instances of this problem --- a reflection on their importance~\cite{sc1,sc2,sc3}. 

An important class of quantum simulations that is known to be particularly challenging, and is the focus of this work, is that of time-dependent quantum processes, which are at the heart of many important quantum phenomena. These include for example quantum control schemes~\cite{td1}, transition states of chemical reactions~\cite{timeDep2} analog quantum computers such as quantum annealers~\cite{farhi_quantum_2001} and the quantum approximate optimization algorithm~\cite{qaoa}. Devising state-of-the-art resource efficient quantum algorithms to simulate these types of processes on quantum circuits is therefore a very worthy cause: it will allow for the studying of said phenomena in a controllable and vastly more illuminating manner.

In the literature, a number of quantum algorithms designed to simulate the dynamics of time-dependent quantum many-body Hamiltonians already exist.  However, most of them are variants of algorithms that suit time-independent Hamiltonians but lack optimizations for dynamical ones. 
For example, Hamiltonians based on the Lie-Trotter-Suzuki decomposition were developed in Refs.~\cite{Wiebe_2011,PhysRevLett.106.170501}, where the complexity scales polynomially with error. More recent advances~\cite{Childs2014STOC,Berry2015} improve it to a logarithmic error scaling, which directly lead to applications in time-dependent Hamiltonian simulations~\cite{low2019hamiltonian,Berry2019}. A recent study by Berry {\it et al.}~\cite{Berry2020timedependent} improves the Hamiltonian scaling to $L^1-$norm, by considering the dynamical properties of the time-dependent Hamiltonian. However, these mostly comprise of slicing the dynamics into a sequence of `quasi-static' steps, each of which implementing a static quantum simulation module. In addition, all the above-mentioned algorithms assume a time-dependent oracle --- a straightforward but not necessarily practical assumption that can obscure the true complexity of the simulation when physical models are considered.

The sub-optimality that characterizes existing quantum algorithms can be attributed mainly to the fact that the time-evolution operator for time-dependent Hamiltonians is a more intricate entity than its time-independent counterpart (this matter is discussed in more detail below): While in the time-independent case the Schr{\"o}dinger equation can be formally integrated, the time-evolution unitary operator for time-dependent systems is given in terms a Dyson series~\cite{Dyson49} --- a perturbative expansion, wherein each summand is a multi-dimensional integral over a time-ordered product of the (usually interaction-picture) Hamiltonian at different points in time. These time-ordered integrals pose multiple algorithmic and implementation challenges.  

In this paper, we provide a quantum algorithm for simulating a time-dependent Hamiltonian dynamics. This algorithm invokes a separation of the Hamiltonian $H(t)$ into a sum of a static diagonal part $H_0$ and a dynamical part $V(t)$, i.e., $H(t)=H_0+V(t)$, and switches to the interaction-picture with respect to $H_0$.  The target evolution operator becomes a product of an interaction-picture unitary $U_I(t)$ followed by a diagonal unitary $\e^{-iH_0t}$ that can be simulated efficiently. The interaction Hamiltonian $V(t)$ is expanded as a sum of generalized permutations, and the resulting Dyson series of the evolution operator $U_I(t)$ becomes an integral-free representation~\cite{kalev2020integralfree} with the notion of divided differences, which is a well-studied quantity~\cite{deBoor2005,davis1975interpolation,Mccurdy1980,GUPTA2020107385,Mccurdy1984,Zivcovich2019}.  The divided differences have an intuition of discretized derivatives and is closely related to polynomial interpolations~\cite{deBoor2005}. We refer the reader to Appendix \ref{DDapendix} for a short summary of the notion of the divided differences. Under this representation, we use the LCU method~\cite{Berry2015} to simulate $U_I(t)$ with a truncated Dyson series. We find a partitioning scheme that determines the duration of the time steps along the simulation. Following this procedure, in general, each time interval has a different duration which is determined form the Hamiltonian's dynamical characteristics and can lead to substantially fewer number of steps as compared to using identical-length simulation segments, typically used in quantum simulation algorithms. We analyze the implementation gate and qubit costs and discuss the circumstances under which our simulation algorithm provides improvements over the state-of-the-art. Specifically, our algorithm is independent of the oscillation frequencies of the Hamiltonian.  This is in stark contrast to existing algorithms which have dependence on $||d{H}(t)/dt||$, which grows with oscillation rates. Another class of Hamiltonians for which our algorithm is preferred over others is those with exponential decays. We show that for these systems, our algorithms requires asymptotically a finite number of steps which does not scale with the evolution time, leading in turn to an exponential saving comparing to the linear scaling in existing approaches. Moreover, the cost with Hamiltonian norm only mainly depends on the interaction Hamiltonian $V(t)$ and not the total Hamiltonian $H(t)$~\cite{Berry2020timedependent}. This also indicates an advantage of the algorithm when the time-dependent Hamiltonian is dominant by a static part.

The paper is organized as follows. In Sec.~\ref{ODExpansion}, we review the permutation expansion method that leads to an integral-free representation for the Dyson series, as introduced in Ref.~\cite{kalev2020integralfree}. In Sec.~\ref{simulationalgorithm}, we present in detail the simulation algorithm that combines the integral-free expression of the evolution operator with the LCU method, and analyze the circuit costs. We highlight the main advantages of our algorithm in Sec.~\ref{advantages}. In Sec.~\ref{generaltimedependence}, we address the cases when the exponential-sum expansion of the time-dependence is not exact and estimate the error that stems from a finite sum approximation. Finally, we give a brief summary for our methods and results in Sec.~\ref{sec:conc}.
 
\section{Permutation expansion method for time-dependent Hamiltonians}\label{ODExpansion}
In this section, we briefly describe the integral-free Dyson series expression of the evolution operator, derived from a permutation expansion of the time-dependent Hamiltonian~\cite{kalev2020integralfree}. 
Without loss of generality~\cite{pmr}, we expand a general time-dependent Hamiltonian in terms of products of time-dependent diagonal matrices, $D_i(t)$,  and permutation operators, $P_i$, i.e.,
\begin{equation}
H(t)=\displaystyle\sum_{i=0}^{M} D_i(t)P_i, \label{perm1}
\end{equation}
where  $P_0\equiv \mathds{1}$. This decomposition can be done efficiently as long as $M$ scales polynomially with $\log d$, where $d$ is the dimension  of the Hamiltonian. We decompose each diagonal matrix into a finite sum of exponential functions, i.e., 
\begin{equation}
D_i(t)=\sum_{k=1}^{K_i}\exp\left(\Lambda^{(k)}_i t\right) D_i^{(k)}, \label{perm2}
\end{equation}
where $\Lambda^{(k)}_i$ and $D_i^{(k)}$ are complex diagonal matrices with diagonal elements being
\begin{align}
&\lambda^{(k)}_{i,z}\equiv \langle z| \Lambda^{(k)}_i|z\rangle, \\
&d_{i,z}^{(k)}\equiv \langle z| D_i^{(k)} |z\rangle,
\end{align}
 in some basis $\{|z\rangle\}$ (the basis in which $D_0$ is diagonal) and $K_i$ indicates the number of terms in the exponential decomposition for $D_i(t)$. This can be done for many cases when the time dependencies are simple combinations of exponential terms. For simplicity we assume here that the $K_i$'s are finite, and address the most general time dependence in detail in Sec.~\ref{generaltimedependence} and refer to various algorithms~\cite{Beylkin2005,Beylkin2010,Braess2009,Wiscombe1977,Norvidas2010} for efficiently finding an exponential sum approximation of a function. 
 
For a lighter notation, we set $K_i=K$ for all $i$. 
 We can evaluate the time-evolution operator $U(t)$ corresponding to $H(t)$ as
\begin{align}
U(t)&\equiv \mathcal{T}\text{exp}\left[-i\int_0^t H(t')dt' \right] \nonumber\\
&= \sum_{q=0}^{\infty}(-i)^q\int^t_0d\tau_q\cdots\int^{\tau_2}_0 d\tau_1 H(\tau_q)\cdots H(\tau_1) \\
&=\sum_{q=0}^{\infty}\sum_{\bold{i}_q}\sum_{\bold{k}_q}(-i)^q \int^t_0d\tau_q\cdots\int^{\tau_2}_0 d\tau_1 \exp\left(\Lambda^{(k_q)}_{i_q}\tau_q\right)D_{i_q}^{(k_q)}P_{i_q}\cdots \exp\left(\Lambda^{(k_1)}_{i_1}\tau_1\right)D_{i_1}^{(k_1)}P_{i_1}, \label{texpand}
\end{align}
where $\bold{i}_q=\{i_q,\cdots,i_1\}$ and $\bold{k}_q=\{k_q,\cdots,k_1\}$ are multi-indices. The action of $U(t)$ on a basis vector $|z\rangle$ is 
\begin{align}
U(t)|z\rangle&=\sum_{q=0}^{\infty}\sum_{\bold{i}_q}\sum_{\bold{k}_q}(-i)^q \int^t_0d\tau_q\cdots\int^{\tau_2}_0 d\tau_1 \exp\left(\Lambda^{(k_q)}_{i_q}\tau_q\right)D_{i_q}^{(k_q)}P_{i_q}\cdots \exp\left(\Lambda^{(k_1)}_{i_1}\tau_1\right)D_{i_1}^{(k_1)}P_{i_1}|z\rangle \label{LambdaD}\\\nonumber
&=\sum_{q=0}^{\infty}\sum_{\bold{i}_q}\sum_{\bold{k}_q}(-i)^q \int^t_0d\tau_q\cdots\int^{\tau_2}_0 d\tau_1 \exp\left(\lambda^{(k_q)}_{i_q,z_{\bold{i}_q}}\tau_q+\cdots+\lambda^{(k_1)}_{i_1,z_{\bold{i}_1}}\tau_1\right)d_{i_q,z_{\bold{i}_q}}^{(k_q)}\cdots d_{i_1,z_{\bold{i}_1}}^{(k_1)}P_{i_q}\cdots P_{i_1}|z\rangle \\\nonumber
&=\sum_{q=0}^{\infty}\sum_{\bold{i}_q}\sum_{\bold{k}_q}(-i)^q \int^t_0d\tau_q\cdots\int^{\tau_2}_0 d\tau_1 \exp\left(\lambda^{(k_q)}_{i_q,z_{\bold{i}_q}}\tau_q+\cdots+\lambda^{(k_1)}_{i_1,z_{\bold{i}_1}}\tau_1\right)d_{\bold{i}_q,z}^{(\bold{k}_q)}P_{\bold{i}_q}|z\rangle,
\end{align}
where $|z_{\bold{i}_j}\rangle\equiv P_{i_j}\cdots P_{i_1}|z\rangle$ with $j$ ranging from 1 to $q$, and $\lambda^{(k_j)}_{i_j,z_{\bold{i}_j}}\left(d_{i_j,z_{\bold{i}_j}}^{(k_j)}\right)$ is the $z_{\bold{i}_j}$th diagonal element of $\Lambda^{(k_j)}_{i_j}\left(D^{(k_j)}_{i_j}\right)$. $P_{\bold{i}_q}$ is a shorthand of $P_{i_q}\cdots P_{i_1}$, and similarly $d_{\bold{i}_q,z}^{(\bold{k}_q)}\equiv d_{i_q,z_{\bold{i}_q}}^{(k_q)}\cdots d_{i_1,z_{\bold{i}_1}}^{(k_1)}$. Figure \ref{perm_fig} illustrates the accumulative actions of $D^{(k)}_{i}P_{i}$ on a basis vector $|z\rangle$. 

\begin{figure}
\centering
\includegraphics[width=\textwidth]{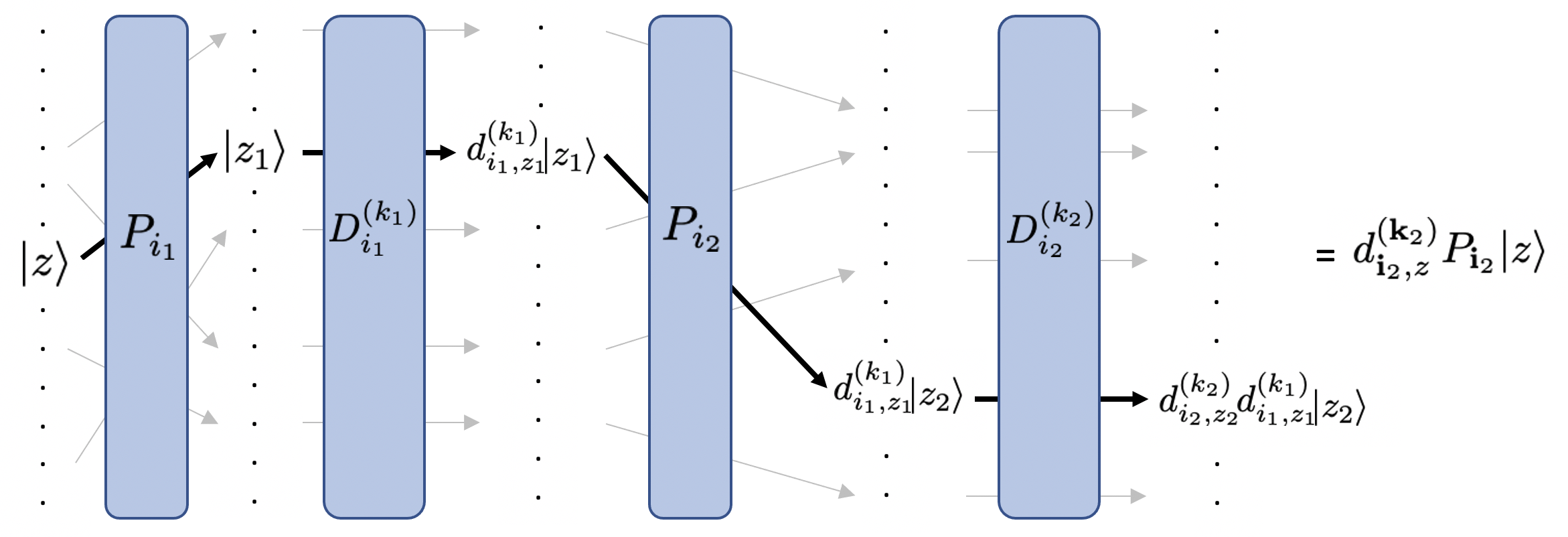}
\caption{The actions of a sequence of generalized permutations. This figure gives a pictorial illustration on how the elements of the diagonal matrices are picked up when interleaving with permutations. In this example, we have $q=2$.}
\label{perm_fig}
\end{figure}

To proceed, we use the following identity to simplify the expression in terms of divided differences. It is a variant of Hermite-Genocchi formula~\cite{deBoor2005} applying to the exponential function.
\begin{identity}\label{identity1}
For $\lambda_1,\cdots, \lambda_q\in \mathbb{C}$,
\begin{equation}
\int^1_0 ds_q\cdots \int^{s_2}_0ds_1 {\rm e}^{(\lambda_1 s_1+\cdots+\lambda_q s_q)}={\rm e}^{[x_1,\cdots,x_q,0]},
\end{equation}
where $x_j=\sum_{l=j}^q \lambda_l$ and ${\rm e}^{[x_1,\cdots,x_q,0]}$ is the divided difference of the exponential function with inputs $x_1,\cdots,x_q,0$. \emph{(The case with $q=1$ can be shown by explicit integration, and the identity follows by induction. For more details, see Ref.~\cite{kalev2020integralfree}.)}
\end{identity}

With this property, the multi-dimensional integration in the time-evolution operator can be simplified as
\begin{align}
U(t)|z\rangle&=\sum_{q=0}^{\infty}\sum_{\bold{i}_q}\sum_{\bold{k}_q} (-i)^q\int^t_0d\tau_q\cdots\int^{\tau_2}_0 d\tau_1 \exp\left(\lambda^{(k_q)}_{i_q,z_{\bold{i}_q}}\tau_q+\cdots+ \lambda^{(k_1)}_{i_1,z_{\bold{i}_1}}\tau_1\right) d^{(\bold{k}_q)}_{\bold{i}_q,z} P_{\bold{i}_q}|z\rangle \nonumber\\
&=\sum_{q=0}^{\infty}\sum_{\bold{i}_q}\sum_{\bold{k}_q} (-it)^q \int^1_0 ds_q\cdots\int^{s_2}_0 ds_1 \exp\left[t\left(\lambda^{(k_q)}_{i_q,z_{\bold{i}_q}}s_q+\cdots+ \lambda^{(k_1)}_{i_1,z_{\bold{i}_1}}s_1\right)\right]d^{(\bold{k}_q)}_{\bold{i}_q,z} P_{\bold{i}_q}  |z\rangle \nonumber\\
&=\sum_{q=0}^{\infty}\sum_{\bold{i}_q}\sum_{\bold{k}_q} (-i)^q \e^{t[x_1,x_2,\cdots, x_q,0]} d^{(\bold{k}_q)}_{\bold{i}_q,z}  P_{\bold{i}_q}|z\rangle, \label{form1}
\end{align}
where $x_j=\sum_{l=j}^q \lambda^{(k_l)}_{i_l,z_{\bold{i}_l}}$. The second equality uses the change of variable $d\tau=tds$, and the last equality follows from Identity \ref{identity1} and the identity of $t^q \e^{[t x_0,\cdots, t x_q]}=\e^{t[x_0,\cdots,x_q]}$. By completing the basis, we get
\begin{equation}
U(t)=\sum_z U(t)|z\rangle\langle z|=\sum_z\sum_{q=0}^{\infty}\sum_{\bold{i}_q}\sum_{\bold{k}_q} (-i)^q \e^{t[x_1,x_2,\cdots, x_q,0]} d^{(\bold{k}_q)}_{\bold{i}_q,z}  P_{\bold{i}_q}|z\rangle\langle z|. \label{expansion1} 
\end{equation}
This is an integral-free expression for the unitary time-evolution operator of the time-dependent Hamiltonian $H(t)$. We will later approximate the unitary by truncating the series at some order $q=Q$ that scales as $\mathcal{O}\left(\frac{\text{log}(1/\epsilon)}{\text{loglog}(1/\epsilon)}\right)$~\cite{Berry2015}, where $\epsilon$ is the required accuracy. 

\section{Time-dependent Hamiltonian simulation algorithm}\label{simulationalgorithm}
A time-dependent Hamiltonian $H(t)$ can be expressed as a sum of two Hamiltonians---a time-independent $H_0$ and a dynamical $V(t)$, i.e.,
\begin{equation}
H(t)=H_0+V(t).
\end{equation}
In many practical models, $H_0$ represents a static and simple Hamiltonian that is often diagonal in a known basis (which we will identify with the computational basis). Hence, hereafter, we assume that $H_0$ is a diagonal operator with real diagonal elements. The $V(t)$ component represents the nontrivial interactions between subsystems. Assume\footnote{For the most general cases, one can set $H_0=0$.} $H_0$ is diagonal in the computational basis $\{|z\rangle\}$. We switch to the interaction picture, i.e., 
\begin{equation}
\frac{d}{dt}|\psi(t)\rangle=-i H(t)|\psi(t)\rangle \to \frac{d}{dt}|\psi_I(t)\rangle=-i H_I(t)|\psi_I(t)\rangle,
\end{equation}
where 
\begin{equation}
|\psi_I(t)\rangle= \e^{iH_0t}|\psi(t)\rangle \ \ \text{and}\ \ H_I(t)=\e^{iH_0t}V(t)\e^{-iH_0t}.
\end{equation}
The Schr{\"o}dinger-picture unitary operator $U(t)$, satisfying $|\psi(t)\rangle=U(t)|\psi(0)\rangle$, is equivalent to a time-ordered matrix exponential followed by a diagonal unitary, i.e.,
\begin{equation}
U(t)=\e^{-iH_0t} \mathcal{T}\exp\left[-i\int_0^t H_I(t')dt' \right]=\e^{-iH_0t} \mathcal{T}\exp\left[-i\int_0^t \e^{iH_0t'}V(t') \e^{-iH_0t'}dt' \right]. \label{U_int}
\end{equation}
Hence, the simulation of $U(t)=\e^{-iH_0t}U_I(t)$ consists of two parts---a complicated $U_I(t)$ and a simple diagonal unitary $\e^{-iH_0t}.$ The simulation of $\e^{-iH_0t}$ can be achieved with a gate cost that scales only linearly with the locality of $H_0$. When we write $H_0=\sum^{L}_{\gamma=0}J_{\gamma}Z_{\gamma}$, where each $Z_{\gamma}$ is some tensor product of (single-qubit) Pauli-$Z$ operators acting on at most $d$ qubits, it can be shown that the gate cost scales as $\mathcal{O}(Ld)$~\cite{NielsenChuang,kalev2020simulating}.  Therefore, the main focus of our simulation is on $U_{I}$. 

We next provide an overview of the simulation algorithm in Sec.~\ref{overview}. In Sec.~\ref{algorithm}, we incorporate the LCU framework with the permutation expansion method. Sec.~\ref{stateprep} provides the state preparation operation and Sec.~\ref{circuitcost} evaluates the simulation cost for the whole procedure. 

\subsection{An overview of the algorithm}\label{overview}
Our proposed simulation algorithm consists of a permutation expansion procedure for $U_I$ and the LCU method for the quantum simulation. In Sec.~\ref{algorithm}, we explain in detail the essential ingredients for merging these two approaches. Before delving into technical details, we provide an overview of the algorithm in this section. 

Given a time-dependent Hamiltonian $H(t)$, we first decompose $H(t)$ into a sum of a static diagonal term $H_0$ (if exists) and a dynamical term $V(t)$. We switch to an interaction picture so that the target unitary evolution $U(T)$ over a period $T$ becomes
\begin{equation}
U(T)\equiv \mathcal{T}\exp\left[-i\int_0^T H(t)dt \right]=\e^{-iH_0T} \mathcal{T}\exp\left[-i\int_0^T \e^{iH_0t}V(t) \e^{-iH_0t}dt \right]\equiv \e^{-iH_0T}U_I(T).
\end{equation}
Therefore, the simulation of $U(T)$ is equivalent to applying $U_I(T)$ followed by $\e^{-iH_0T}$. Since the diagonal unitary $\e^{-iH_0T}$ can be efficiently simulated, we focus on $U_I(T)$ hereafter. 

Let us expand $V(t)$ as a sum of permutations as 
\begin{equation}
V(t)=\displaystyle\sum_{i=0}^{M} D_i(t)P_i, \label{Vexpand1}
\end{equation}
where $P_i$ are permutations ($P_0\equiv \mathds{1}$) and $D_i(t)$ are some diagonal matrices that are expressed as exponential sums, i.e.,
\begin{equation}
D_i(t)=\sum_{k=1}^{K}\exp\left(\Lambda^{(k)}_i t\right) D_i^{(k)}. \label{Vexpand2}
\end{equation}
$\Lambda^{(k)}_i$ and $D_i^{(k)}$ are some complex diagonal matrices. Partition $U_I(T)$ into $r$ segments $U_I(T,t_{r-1})\cdots U_I(t_1,0)$, whose respective durations $\Delta t_w$, $w=0,\ldots, r-1$ are determined by the partitioning scheme given in Sec.~\ref{algorithm} and the time markers $t_w$,  are defined as $t_w=\sum_{l=0}^{w-1} \Delta t_{l}$. The total number of steps is denoted as $r$. The evolution operator from $t_w$ to $t_w+\Delta t_w$ is expressed as
\begin{equation}
U_I(t_w+\Delta t_w, t_w)=\sum_{q=0}^{\infty}\sum_{\bold{i}_q}\sum_{\bold{k}_q}\sum_{x=\pm}(-i)^q \frac{\left(\frac{\e^{\Delta t_w \lambda}-1}{\lambda}\right)^q }{2q!}\Gamma^{(\bold{k}_q)}_{\bold{i}_q}(t_w) P_{\bold{i}_q} \Phi^{(\bold{k}_q,w)}_{\bold{i}_q,x},
\end{equation}
where we denote 
\begin{equation}
\Gamma^{(\bold{k}_q)}_{\bold{i}_q}(t_w)= \left|\left| D_{i_q}^{(k_q)}\right|\right|_{\max}\e^{t_w \lambda_{(i_q,k_q)}}\cdots\left|\left| D_{i_1}^{(k_1)}\right|\right|_{\max}\e^{t_w \lambda_{(i_1,k_1)}},
\end{equation}
where $||\cdot||_{\max}$ the max norm, $\lambda_{(i,k)}=\max_{z}\Re(\langle z|\Lambda^{(k)}_i|z\rangle)$ the maximum real part of $\Lambda^{(k)}_i$ and $\lambda=\max_{i,k}\{\lambda_{(i,k)}\}$. Here, $\Phi^{(\bold{k}_q,w)}_{\bold{i}_q,\pm}$ are some diagonal unitaries as derived later in Eq. (\ref{Uexpand}) and each $P_{\bold{i}_q}$ is a unique product of permutations. Note that the above evolution operators are given as a linear combination of unitaries (LCU). We provide a review for the LCU method in Appendix~\ref{LCUrecap}. We set the truncation order  $Q$ to be\footnote{An exact truncation order that guarantees the accuracy $\epsilon$ is $\left\lceil{\frac{\text{ln}(2r/\epsilon)}{W\left(\frac{\text{ln}(2r/\epsilon)}{e\text{ln}2}\right)}-1}\right\rceil$, where $W(\cdot)$ is the Lambert W-function.}
\begin{equation}
Q=\mathcal{O}\left(\frac{\log (r/\epsilon)}{\log\log (r/\epsilon)}\right),
\end{equation}
where $\epsilon$ is the overall simulation accuracy. 

To implement the LCU routine for each $U_I(t_w+\Delta t_w, t_w)$, we require preparing a state 
\begin{align}
|\psi_0\rangle=\frac{1}{\sqrt{s}}\sum_{q=0}^{Q}\sum_{\bold{i}_q}\sum_{\bold{k}_q}\sum_{x=0,1}\sqrt{\frac{\left(\frac{\e^{\Delta t_w \lambda}-1}{\lambda}\right)^q\Gamma^{(\bold{k}_q)}_{\bold{i}_q}(t_w)}{2q!}} |\bold{i}_q\rangle|\bold{k}_q\rangle|x\rangle,
\end{align}
where $|\bold{i}_q\rangle$ represents $Q$ quantum registers that each has dimension $M$ and $|\bold{k}_q\rangle$ represents $Q$ quantum registers that each has dimension $K$, and $s$ is the normalization factor. Following the same notation in Sec.~\ref{LCUrecap}, let us denote the state preparation unitary as $B$, i.e., $B|0\rangle^{\otimes 2Q+1}=|\psi_0\rangle$ (B is explicitly given in Sec.~\ref{stateprep}). Let us denote $V_c$ the control unitary such that
\begin{equation}
V_c|\bold{i}_q\rangle|\bold{k}_q\rangle|x\rangle |\psi\rangle=|\bold{i}_q\rangle|\bold{k}_q\rangle|x\rangle (-i)^qP_{\bold{i}_q} \Phi^{(\bold{k}_q,w)}_{\bold{i}_q,x}|\psi\rangle.
\end{equation}
The Oblivious Amplitude Amplification (OAA) involves interleaving the operator $W=(B^{\dagger}\otimes I)V_c(B\otimes I)$ as 
\begin{equation}
A=-WRW^{\dagger}RW,
\end{equation}
where $R\equiv I-2(|0\rangle\langle 0|\otimes I)$. For each piece of the unitary, we implement $A$ on the extended system $|0\rangle^{\otimes (2Q+1)} |\psi\rangle$. By construction, we have
\begin{equation}
\left|\left|A|0\rangle^{\otimes (2Q+1)} |\psi\rangle-|0\rangle^{\otimes (2Q+1)} U_I(t_w+\Delta t_w,t_w)|\psi\rangle \right|\right|=\mathcal{O}\left(\frac{\epsilon}{r}\right).
\end{equation}
This means that applying $A$ effectively performs the unitary $U_I(t_w+\Delta t_w,t_w)$ on the main system $|\psi\rangle$, with error $\mathcal{O}(\epsilon/r)$. Combining $r$ pieces of the procedure, it effectively simulates $U_I(T)$ with overall error $\mathcal{O}(\epsilon)$, i.e.,
\begin{equation}
\left|\left|A_{r-1} \cdots A_1A_0|0\rangle^{\otimes (2Q+1)} |\psi\rangle-|0\rangle^{\otimes (2Q+1)} U_I(T)|\psi\rangle \right|\right|=\mathcal{O}\left(\epsilon\right),
\end{equation}
where $A_w$ are the OAA operators for the corresponding piece of evolution. This implies that applying the sequence of $A$s followed by the circuit for $\e^{-iH_0T}$ can approach the action of $U(T)$ to an arbitrary accuracy.

\subsection{Permutation expansion for $U_I(t)$}\label{algorithm}
In this section, we give a thorough introduction of the permutation expansion in the Dyson series and the conditions arisen from implementing the LCU method. We focus on addressing the interaction-picture unitary $U_I(t)$, i.e., the time-ordered operator in Eq. (\ref{U_int}). Using the expansions introduced in Eqs.~(\ref{Vexpand1}) and (\ref{Vexpand2}), we get
\begin{align}
&U_I(t)\equiv \mathcal{T}\exp\left[-i\int_0^t \e^{iH_0t'}V(t') \e^{-iH_0t'}dt' \right] \nonumber \\
&=\sum_{q=0}^{\infty}(-i)^q\int^t_0d\tau_q\cdots\int^{\tau_2}_0 d\tau_1 \e^{iH_0\tau_q}V(\tau_q)\e^{-iH_0\tau_q}\cdots \e^{iH_0\tau_1}V(\tau_1)\e^{-iH_0\tau_1} \nonumber\\
&=\sum_{q=0}^{\infty}\sum_{\bold{i}_q}\sum_{\bold{k}_q}(-i)^q \int^t_0d\tau_q\cdots\int^{\tau_2}_0 d\tau_1  \e^{iH_0\tau_q}  \e^{\Lambda^{(k_q)}_{i_q} \tau_q} D_{i_q}^{(k_q)}P_{i_q}\e^{-iH_0\tau_q}\cdots \e^{iH_0\tau_1}  \e^{\Lambda^{(k_1)}_{i_1} \tau_1} D_{i_1}^{(k_1)}P_{i_1}\e^{-iH_0\tau_1},
\end{align}
We denote the basis in which $H_0$ is diagonal by $\{\ket{z}\}$ and its diagonal elements by $E_z=\langle z| H_0|z\rangle$. The action of $U_I(t)$ on a basis vector $|z\rangle$ becomes
\begin{align}
&U_I(t)|z\rangle= \sum_{q=0}^{\infty}\sum_{\bold{i}_q}\sum_{\bold{k}_q}(-i)^q \int^t_0d\tau_q\cdots\int^{\tau_2}_0 d\tau_1 \\\nonumber
&\times\exp\left[\left(iE_{z_{\bold{i}_q}}-iE_{z_{\bold{i}_{q-1}}}+\lambda^{(k_q)}_{i_q,z_{\bold{i}_q}}\right)\tau_q+\cdots+\left(iE_{z_{\bold{i}_1}}-iE_z+\lambda^{(k_1)}_{i_1,z_{\bold{i}_1}} \right)\tau_1\right] d^{(\bold{k}_q)}_{\bold{i}_q,z}P_{\bold{i}_q}|z\rangle, 
\end{align}
where $E_{z_{\bold{i}_j}}$ is the $z_{\bold{i}_j}$th diagonal element of $H_0$, i.e., $E_{z_{\bold{i}_j}}=\langle z_{\bold{i}_j}| H_0| z_{\bold{i}_j}\rangle$, and $\ket{z_{\bold{i}_j}}=P_{\bold{i}_j}|z\rangle$ with $P_{\bold{i}_j}=P_{i_j}\cdots P_{i_1}$. By Identity 1, this can be further simplified as
\begin{align}
U_I(t)|z\rangle=\sum_{q=0}^{\infty}\sum_{\bold{i}_q}\sum_{\bold{k}_q}(-i)^q \e^{t[x_1,\cdots,x_q,0]}d^{(\bold{k}_q)}_{\bold{i}_q,z}P_{\bold{i}_q}|z\rangle, \label{U_t}
\end{align}  
where 
\begin{equation}
x_j=i\left(E_{z_{\bold{i}_q}}-E_{z_{\bold{i}_{j-1}}}\right)+\sum_{l=j}^q \lambda^{(k_l)}_{i_l,z_{\bold{i}_l}}. \label{defofx}
\end{equation}

\subsection{The LCU routine}

To implement the LCU method for a quantum simulation of $U_I(T)$, we first decompose the overall simulation duration $T$ into $r$ pieces in sequence, i.e., 
\begin{equation}
U_I(T)=U_I(T, t_{r-1})U_I(t_{r-1},t_{r-2})\cdots U_I(t_1,0)=\prod^{r-1}_{w=0}U_I(t_w+\Delta t_w, t_w),
\end{equation}
where the operators in the product of the last equation are understood to be ordered, $t_{w+1}=t_w+\Delta t_w$ and $t_0\equiv 0$ and $t_r\equiv T$. The number of steps, $r$, and the step size, $\Delta t_w$, are to be determined. When acting on a computational basis state, each piece in the decomposition can be written as
\begin{align}
&U_I(t_w+\Delta t_w, t_w)|z\rangle=\mathcal{T}\text{exp}\left[-i\int_{ t_w}^{t_w+\Delta t_w} H_I(t')dt' \right]|z\rangle \nonumber\\
&=\sum_{q=0}^{\infty}\sum_{\bold{i}_q}\sum_{\bold{k}_q}(-i)^q \int^{t_w+\Delta t_w}_{t_w}d\tau_q\cdots\int^{\tau_2}_{t_w} d\tau_1\exp\Bigg[\sum_{l=1}^q\left(iE_{z_{\bold{i}_l}}-iE_{z_{\bold{i}_{l-1}}}+\lambda^{(k_l)}_{i_l,z_{\bold{i}_l}}\right)\tau_l\Bigg]d^{(\bold{k}_q)}_{\bold{i}_q,z}P_{\bold{i}_q}|z\rangle, \nonumber\\
&=\sum_{q=0}^{\infty}\sum_{\bold{i}_q}\sum_{\bold{k}_q}(-i)^q \exp \left[t_w\sum_{l=1}^q\left(iE_{z_{\bold{i}_l}}-iE_{z_{\bold{i}_{l-1}}}+\lambda^{(k_l)}_{i_l,z_{\bold{i}_l}} \right) \right] \nonumber\\
& \times \int^{\Delta t_w}_{0}d\tau'_q\cdots\int^{\tau'_2}_{0} d\tau'_1 \exp\Bigg[\sum_{l=1}^q\left(iE_{z_{\bold{i}_l}}-iE_{z_{\bold{i}_{l-1}}}+\lambda^{(k_l)}_{i_l,z_{\bold{i}_l}}\right)\tau'_l\Bigg]d^{(\bold{k}_q)}_{\bold{i}_q,z}P_{\bold{i}_q}|z\rangle \nonumber\\
&=\sum_{q=0}^{\infty}\sum_{\bold{i}_q}\sum_{\bold{k}_q}(-i)^q \exp \left[t_w\sum_{l=1}^q\left(iE_{z_{\bold{i}_l}}-iE_{z_{\bold{i}_{l-1}}}+\lambda^{(k_l)}_{i_l,z_{\bold{i}_l}} \right) \right]\e^{\Delta t_w[x_1,x_2,\cdots, x_q,0]} d^{(\bold{k}_q)}_{\bold{i}_q,z} P_{\bold{i}_q}|z\rangle \nonumber\\
& =\sum_{q=0}^{\infty}\sum_{\bold{i}_q}\sum_{\bold{k}_q}(-i)^q 
\e^{- i t_w (E_{z_{\bold{i}_{0}}}-E_{z_{\bold{i}_{q}}})}\e^{t_w\sum_{l=1}^q \lambda^{(k_l)}_{i_l,z_{\bold{i}_l}}} 
\e^{\Delta t_w[x_1,x_2,\cdots, x_q,0]} d^{(\bold{k}_q)}_{\bold{i}_q,z} P_{\bold{i}_q}|z\rangle, \label{U_t_dt}
\end{align}
which has the same form as Eq. (\ref{U_t}) except that the integration intervals are shifted (with $E_{z_{\bold{i}_{0}}}\equiv E_z$). We can denote 
\begin{equation}
d^{(\bold{k}_q)}_{\bold{i}_q,z}(t_w)=d^{(\bold{k}_q)}_{\bold{i}_q,z} \e^{t_w\sum_{l=1}^q \lambda^{(k_l)}_{i_l,z_{\bold{i}_l}}}, \nonumber
\end{equation}
which leads to
\begin{align}
&U_I(t_w+\Delta t_w, t_w)|z\rangle=
\sum_{q=0}^{\infty}\sum_{\bold{i}_q}\sum_{\bold{k}_q}(-i)^q 
\e^{- i t_w (E_{z_{\bold{i}_{0}}}-E_{z_{\bold{i}_{q}}})}
\e^{\Delta t_w[x_1,x_2,\cdots, x_q,0]} d^{(\bold{k}_q)}_{\bold{i}_q,z}(t_w) P_{\bold{i}_q}|z\rangle, \label{U_t_dt}
\end{align}
To formulate the above expression in terms of a linear combination of unitaries, we need to evaluate the norms of $\e^{\Delta t_w[x_1,x_2,\cdots, x_q,0]}$ and $d^{(\bold{k}_q)}_{\bold{i}_q,z}(t_w)$. The norm of $d^{(\bold{k}_q)}_{\bold{i}_q,z}(t_w)$ is bounded by
\begin{equation}\label{eq:Gamma-bound-d}
\left\vert d^{(\bold{k}_q)}_{\bold{i}_q,z}(t_w)\right\vert\leq ||D^{(k_q)}_{i_q}||_{\max}\e^{t_w \lambda_{(i_q,k_q)}}\cdots||D^{(k_1)}_{i_1}||_{\max} \, \e^{t_w\lambda_{(i_1,k_1)}} =\Gamma^{(\bold{k}_q)}_{\bold{i}_q}(t_w) \,.
\end{equation}
The norm of the $\e^{\Delta t_w[x_1,x_2,\cdots, x_q,0]}$ can be bounded by using the following identity. 
\begin{identity}\label{bound1}
For any $q+1$ complex values $x_0,\cdots,x_q\in \mathbb{C}$, 
\begin{equation}
\left| {\rm e}^{[x_0,\cdots,x_q]}\right|\leq {\rm e}^{[\Re(x_0),\cdots,\Re(x_q)]}=\frac{{\rm e}^{\xi}}{q!},
\end{equation}
where $\Re(\cdot)$ denotes the real part of an input and $\xi\in \left[\min\{\Re(x_0),\cdots,\Re(x_q)\},\max\{\Re(x_0),\cdots,\Re(x_q)\}\right]$.
\end{identity}
The proof can be found in Appendix \ref{DDapendix}. From Identity~\ref{bound1}, we show in Appendix \ref{forbounds} that
\begin{equation}
\left|\e^{\Delta t_w[x_1,x_2,\cdots, x_q,0]}\right|\leq \e^{\Delta t_w[q \lambda, (q-1)\lambda,\ldots,\lambda,0] }
=  \frac{1}{q!}\left(\frac{\e^{\Delta t_w \lambda }-1}{\lambda} \right)^q \equiv\frac{\widetilde{\Delta t}_w^q}{q!}
, \label{tbound2}
\end{equation}
where we denoted the quantity 
\begin{equation}
\widetilde{\Delta t}_w \equiv\frac{\e^{\Delta t_w \lambda }-1}{\lambda} . 
\end{equation}
With these bounds, the factors in the expansion form in Eq. (\ref{U_t_dt}) can be written as
\begin{align} \label{coefficientbounds}
 \e^{\Delta t_w[x_1,x_2,\cdots, x_q,0]} d^{(\bold{k}_q)}_{\bold{i}_q,z}(t_w) &=\frac{{\widetilde{\Delta t}_w}^q }{q!}\Gamma^{(\bold{k}_q)}_{\bold{i}_q}(t_w) \\\nonumber
 &\times \left(\frac{\e^{\Delta t_w[x_1,x_2,\cdots, x_q,0]}}{\widetilde{\Delta t}_w^q/q!}\frac{\e^{-i t_w  (E_{z_{\bold{i}_{0}}}-E_{z_{\bold{i}_{q}}})}d^{(k_q)}_{i_q,z}\cdots d^{(k_1)}_{i_1,z} \, \e^{t_w\sum_{l=1}^q \lambda^{(k_l)}_{i_l,z_{\bold{i}_l}}}}{\Gamma^{(\bold{k}_q)}_{\bold{i}_q}(t_w)}\right)  \\\nonumber
&= \frac{\widetilde{\Delta t}_w^q }{q!}\Gamma^{(\bold{k}_q)}_{\bold{i}_q}(t_w)\cos\left[\phi^{(\bold{k}_q)}_{\bold{i}_q,z}\right] \e^{i \theta^{(\bold{k}_q)}_{\bold{i}_q,z}} =\frac{\widetilde{\Delta t}_w^q }{2q!}\Gamma^{(\bold{k}_q)}_{\bold{i}_q}(t_w) \left( \e^{i \phi^{(\bold{k}_q)}_{\bold{i}_q,z}+i \theta^{(\bold{k}_q)}_{\bold{i}_q,z}}+\e^{-i \phi^{(\bold{k}_q)}_{\bold{i}_q,z}+i \theta^{(\bold{k}_q)}_{\bold{i}_q,z}}\right),
\end{align}
where
$$
\phi^{(\bold{k}_q)}_{\bold{i}_q,z}=\cos^{-1}\left[\left|\frac{\e^{\Delta t_w[x_1,x_2,\cdots, x_q,0]} d^{(\bold{k}_q)}_{\bold{i}_q,z}(t_w)}{\frac{\widetilde{\Delta t}_w^q }{q!}\Gamma^{(\bold{k}_q)}_{\bold{i}_q}(t_w)}\right|\right]
$$
and
$$
\theta^{(\bold{k}_q)}_{\bold{i}_q,z}=\arg{\left[\frac{\e^{\Delta t_w[x_1,x_2,\cdots, x_q,0]} d^{(\bold{k}_q)}_{\bold{i}_q,z}(t_w)}{\frac{\widetilde{\Delta t}_w^q }{q!}\Gamma^{(\bold{k}_q)}_{\bold{i}_q}(t_w)}\right]}.
$$
The evolution operator from $t_w$ to $t_w+\Delta t_w$ becomes
\begin{align}
U_I(t_w+\Delta t_w, t_w)&=\sum_z U_I(t_w+\Delta t_w, t_w)|z\rangle\langle z|  \nonumber\\
&=\sum_z\sum_{q=0}^{\infty}\sum_{\bold{i}_q}\sum_{\bold{k}_q}(-i)^q \frac{\widetilde{\Delta t}_w^q }{2q!}\Gamma^{(\bold{k}_q)}_{\bold{i}_q}(t_w) \left( \e^{i \phi^{(\bold{k}_q)}_{\bold{i}_q,z}+i \theta^{(\bold{k}_q)}_{\bold{i}_q,z}}+\e^{-i \phi^{(\bold{k}_q)}_{\bold{i}_q,z}+i \theta^{(\bold{k}_q)}_{\bold{i}_q,z}}\right)P_{\bold{i}_q} |z\rangle\langle z| \nonumber\\
&=\sum_{q=0}^{\infty}\sum_{\bold{i}_q}\sum_{\bold{k}_q}\sum_{x=\pm}(-i)^q \frac{\widetilde{\Delta t}_w^q }{2q!}\Gamma^{(\bold{k}_q)}_{\bold{i}_q}(t_w)P_{\bold{i}_q} \Phi^{(\bold{k}_q,w)}_{\bold{i}_q,x}, \label{Uexpand}
\end{align}
where $\Phi^{(\bold{k}_q,w)}_{\bold{i}_q,\pm}$ are diagonal unitaries with diagonal elements being $\e^{i\left(\pm  \phi^{(\bold{k}_q)}_{\bold{i}_q,z}+ \theta^{(\bold{k}_q)}_{\bold{i}_q,z}\right)}$.

To implement the LCU method for simulating $U_I(t_w +\Delta t_w, t_w)$, we require a preparation of the state 
\begin{align}
|\psi_0\rangle&=\frac{1}{\sqrt{s}}\sum_{q=0}^{Q}\sum_{\bold{i}_q}\sum_{\bold{k}_q}\sum_{x=0,1} \sqrt{\frac{\widetilde{\Delta t}_w^q }{2q!}\Gamma^{(\bold{k}_q)}_{\bold{i}_q}(t_w)}|i_1\rangle\cdots|i_q\rangle\otimes|0\rangle^{\otimes(Q-q)}|k_1\rangle\cdots|k_q\rangle\otimes|0\rangle^{\otimes(Q-q)}|x\rangle \nonumber\\
&\equiv\frac{1}{\sqrt{s}}\sum_{q=0}^{Q}\sum_{\bold{i}_q}\sum_{\bold{k}_q}\sum_{x=0,1}\sqrt{\frac{\widetilde{\Delta t}_w^q }{2q!}\Gamma^{(\bold{k}_q)}_{\bold{i}_q}(t_w)} |\bold{i}_q\rangle|\bold{k}_q\rangle|x\rangle, \label{psi0}
\end{align}
where $|\bold{i}_q\rangle$ represents $Q$ quantum registers that each has dimension $M$ and $|\bold{k}_q\rangle$ represents $Q$ quantum registers that each has dimension $K$. The normalization constant is
\begin{equation}
s=\sum_{q=0}^{Q}\sum_{\bold{i}_q}\sum_{\bold{k}_q}\sum_{x=0,1}\frac{\widetilde{\Delta t}_w^q }{2q!}\Gamma^{(\bold{k}_q)}_{\bold{i}_q}(t_w)\equiv\sum_{q=0}^{Q}\frac{(\Gamma(t_w) \widetilde{\Delta t}_w)^q }{q!}, \label{sexpand}
\end{equation} 
where we define the $\Gamma(t_w)$ as 
\begin{equation}
\Gamma(t_w) \equiv \sum_{i=0}^{M}\sum_{k=1}^K \left|\left| D^{(k)}_i \right|\right|_{\max} \e^{t_w \lambda_{(i,k)}} \,,
\end{equation}
and we note that $\Gamma(t_w)$ is an upper bound on the max-norm of the interaction Hamiltonian at time $t_w$, $\Gamma(t_w)\geq\Vert V(t_w)\Vert_{\max}$. The quantity $\Gamma(t_w) $ is related to the energy strength in a typical LCU setup \cite{Berry2015}. In Appendix \ref{AltLCU}, we provide an alternative way that uses a larger bound $\Gamma=MK \max_{\forall k,i}||D^{(k)}_i||_{\max}$, which leads to an exponential saving for the state preparation. We proceed with $\Gamma(t_w)$ hereafter.

The OAA step in the LCU method requires $s\approx 2$. This leads to 
\begin{equation}
\Gamma(t_w) \widetilde{\Delta t}_w =\Gamma(t_w) \frac{\e^{\Delta t_w \lambda}-1}{\lambda} = \text{ln} 2, \label{stepcondition}
\end{equation}
and Eq. (\ref{sexpand}) becomes a truncated Taylor expansion of 2 up to order $Q$, i.e., $2\approx\sum_{q=0}^Q\frac{(\text{ln}2)^q}{q!}$. If we require $|s-2|\leq \epsilon/r$, where $r$ is the total number of steps and $\epsilon$ is some positive number, then the simulation error for each $U_I(t_w+\Delta t_w,t_w)$ is also within $\epsilon/r$. The required truncation order with this accuracy scales as  
\begin{equation}
Q=\mathcal{O}\left(\frac{\log (r/\epsilon)}{\log\log (r/\epsilon)}\right).
\end{equation}

\subsubsection{Time partitioning and number of time steps}\label{timepartition}

The condition in Eq. (\ref{stepcondition}) imposes a constraint on the next step size $\Delta t_w$ given the current time $t_w$,
\begin{equation}
\Delta t_w  = \frac1{\lambda}\ln\left( 1+\frac{\lambda}{\Gamma(t_w)}\ln 2\right) \,. \label{stepcondition1}
\end{equation}
Remembering that $\Gamma(t_w)$ is a function of $t_w=\sum_{l=0}^{w-1} \Delta t_{l}$, this condition determines the schedule, as every $\Delta t_w$ is determined by the preceding time steps. 

Special care should be given when setting the last time step, as $\Delta t_w$ can become too large that exceeds the total desired evolution time $T$. Whenever $t_{w+1}$ is found to be greater than $T$ (or if the argument inside the $\ln(\cdot)$ is found to be negative), one should replace the bound $\Gamma(t_w)$ with a larger bound \hbox{$\tilde{\Gamma}(t_w) =\lambda \ln 2/ (\e^{\lambda \Delta t_w}-1)$} and set the final step $\Delta t_w=T-t_w$.

Let us now examine the dependence of $\Delta t_w$ on   $\Gamma(t_w)$ in order to determine a bound on the number of time steps (equivalently, number of repetitions) $r$ required for the execution of the entire time evolution. We distinguish between three cases. (i) When $\lambda=0$, we have $\Delta t_w\Gamma(t_w) = \ln 2$, similar to the time-independent case though we note that a vanishing maximal $\lambda$ could imply time-dependent oscillations as well. This can be seen by taking the $\lambda \to 0$ limit of Eq.~\eqref{stepcondition1}. (ii) 
In the case where $\lambda<0$, i.e., a system with a decaying $\Gamma(t_w)$, we have $\Delta t_w\Gamma(t_w) \geq \ln 2$, i.e., the time steps are longer than $\ln 2/\Gamma(t_w)$. Furthermore, the total number of steps $r$ is finite even for an arbitrarily large evolution time $T$. Note that since $\Gamma(t_w)$ approaches zero asymptotically, for a large enough time $t_{w*}$, we have $\Gamma(t_{w*})<|\lambda|\ln 2$, i.e., the argument inside the logarithm above becomes negative. This indicates it reaches the final step, i.e., the bound should be modified as \hbox{$\tilde{\Gamma}(t_{w*}) =\lambda \ln 2/ (\e^{\lambda \Delta t_{w*}}-1)$} and $\Delta t_{w*}=T-t_{w*}$ becomes the final step. 
(iii) In the case where $\lambda>0$ (an amplified $\Gamma(t_w)$), we have $\Gamma(t_w)\gg \lambda$  at large simulation times, $t_w$. From Eq.~(\ref{stepcondition1}), we have $\Delta t_w\to \ln2/\Gamma(t_w)$ in this limit.

We see that (for large enough simulation times) the time step $\Delta t_w$ is inversely proportional to $\Gamma(t_w)$ which upper-bounds the max-norm of the interaction Hamiltonian at time $t_w$. Therefore, we have $\sum^{r-1}_{w=0} \Gamma(t_w)\Delta t_w\gtrsim r \ln2$, which implies $r\lesssim \sum^{r-1}_{w=0} \Gamma(t_w)\Delta t_w/\ln2$.
 
It would be instructive to compare the above scaling with that of Ref.~\cite{Berry2020timedependent} in which the simulation algorithm is said to have an $L^1$-norm scaling, i.e., an algorithm cost scaling linearly with $\int_{o}^t  d\tau H_{\max}(\tau) $ up to logarithmic factors. Under a similar intuition, our algorithm has a {\it discretized} $L^1$-norm-like scaling with $\sum^{r-1}_{w=0} \Gamma(t_w)\Delta t_w$. However in our case, $\Gamma(t_w)$ is related to the norm of the interaction Hamiltonian. 
 
 \subsubsection{State preparation}\label{stateprep}
 
 In this subsection, we provide a procedure to prepare the state $|\psi_0\rangle$ given in Eq. (\ref{psi0}). First, we initialize a state $|0\rangle^{\otimes Q}|0\rangle^{\otimes Q}|0\rangle$, where each of the first $Q$ registers has dimension $M$ (responsible for $|\bold{i}_q\rangle$ part), each of the later $Q$ registers has dimension $K$ (responsible for $|\bold{k}_q\rangle$ part), and the last register is a qubit (for the cosine decomposition). For simplicity, we can perform a Hadamard gate on the last qubit and then omit its dependence for the following discussion. The next step is to create a state in following the form,  
\begin{equation}\label{midstate}
\frac{1}{\sqrt{s} } \displaystyle\sum_{q=0}^{Q}\sqrt{s_q} |1\rangle^{\otimes q}|0\rangle^{\otimes (Q-q)}|1\rangle^{\otimes q}|0\rangle^{\otimes (Q-q)},
\end{equation}
where $s_q\equiv \left(\Gamma(t_w)\widetilde{\Delta t}_w  \right)^q/q!$. For each $|1\rangle$ from the first $Q$ registers (the $|\bold{i}_q\rangle$ part) and the corresponding $|1\rangle$ in the later $Q$ registers (the $|\bold{k}_q\rangle$ part), we make 
\begin{equation}
|1\rangle|1\rangle\to \sum_{i=0}^{M}\sum_{k=1}^K\sqrt{\frac{||D^{(k)}_i ||_{\max}\e^{t_w \lambda_{(i,k)}} }{\Gamma(t_w)}}| i \rangle | k \rangle. 
\end{equation}
Then Eq.~(\ref{midstate}) becomes
\begin{equation}
\frac{1}{\sqrt{s}} \displaystyle\sum_{q=0}^{Q}\sqrt{s_q}\sum_{\bold{i}_q}\sum_{\bold{k}_q}\sqrt{\frac{\Gamma^{(\bold{k}_q)}_{\bold{i}_q}(t_w)}{(\Gamma(t_w))^q}}|\bold{i}_q\rangle|\bold{k}_q\rangle=\frac{1}{\sqrt{s} } \displaystyle\sum_{q=0}^{Q}\sum_{\bold{i}_q}\sum_{\bold{k}_q}\sqrt{\frac{(\widetilde{\Delta t}_w)^q}{q!}\Gamma^{(\bold{k}_q)}_{\bold{i}_q}(t_w)}|\bold{i}_q\rangle|\bold{k}_q\rangle, 
\end{equation}
which is the required  $|\psi_0\rangle$ in Eq.~(\ref{psi0}), when combined with $|x\rangle$.

Next, we provide a process that produces the state in Eq.~(\ref{midstate}). First, we perform a rotation that takes the first register in the $|\bold{i}_q\rangle$ part to
\begin{equation}
|0\rangle \to \frac{1}{\sqrt{s} }\left(|0\rangle + \sqrt{\displaystyle\sum_{q=1}^{Q}s_q}|1\rangle \right),
\end{equation}
and perform a control gate from the first register to the second (both in the $|\bold{i}_q\rangle$ part) such that
\begin{align}
&\frac{1}{\sqrt{s} }\left(|0\rangle + \sqrt{\displaystyle\sum_{q=1}^{Q}s_q}|1\rangle \right)|0\rangle  \nonumber\\
&\to \frac{1}{\sqrt{s} }\Bigg[|00\rangle + \sqrt{\sum_{q=1}^{Q}s_q}|1\rangle \frac{1}{\sqrt{\displaystyle\sum_{q=1}^Qs_q}} \Bigg(\sqrt{s_1} |0\rangle+\sqrt{\displaystyle\sum_{q=2}^Qs_q} |1\rangle\Bigg)  \Bigg] \nonumber\\
&= \frac{1}{\sqrt{s} }\left(|00\rangle + \sqrt{s_1}|10\rangle + \sqrt{\displaystyle\sum_{q=2}^Qs_q} |11\rangle \right).
\end{align}
Continuing this procedure for the rest of the registers in the $|\bold{i}_q\rangle$ part, the state becomes 
\begin{align}
|0\rangle^{\otimes Q} \to \frac{1}{\sqrt{s}} \displaystyle\sum_{q=0}^{Q}\sqrt{s_q} |1\rangle^{\otimes q}|0\rangle^{\otimes (Q-q)}.
\end{align}
At this step, we perform CNOT operations\footnote{Strictly speaking, they are not standard CNOTs but higher-dimensional operations that act like a CNOT on the first two levels.} from the first $Q$ registers ($|\bold{i}_q\rangle$ part) to the last $Q$ registers ($|\bold{k}_q\rangle$ part) correspondingly, e.g., perform a CNOT from the first register in the $|\bold{i}_q\rangle$ part to the first register in the $|\bold{k}_q\rangle$ part, and so on and so forth. Finally, we have
\begin{align}
\frac{1}{\sqrt{s} } \displaystyle\sum_{q=0}^{Q}\sqrt{s_q} |1\rangle^{\otimes q}|0\rangle^{\otimes (Q-q)} |0\rangle^{\otimes Q} \to \frac{1}{\sqrt{s} } \displaystyle\sum_{q=0}^{Q}\sqrt{s_q} |1\rangle^{\otimes q}|0\rangle^{\otimes (Q-q)}|1\rangle^{\otimes q}|0\rangle^{\otimes (Q-q)},
\end{align}
which gives Eq.~(\ref{midstate}) as required. The estimated gate cost for the preparation of $|\psi_0\rangle$ is $\mathcal{O}(QMK)$. More detail regarding the cost is provided in Sec. \ref{B_Vc}.

 \subsubsection{Implementation of the controlled unitaries}\label{cu}

The second ingredient of the LCU routine is the construction of the controlled operation
\begin{equation}\label{eq:CU}
V_c|\bold{i}_q\rangle|\bold{k}_q\rangle|x\rangle |\psi\rangle=|\bold{i}_q\rangle|\bold{k}_q\rangle|x\rangle (-i)^qP_{\bold{i}_q} \Phi^{(\bold{k}_q,w)}_{\bold{i}_q,x}|\psi\rangle.
\end{equation}
Taking an approach similar to that taken in Ref.~\cite{kalev2020simulating}, we first note that Eq.~\eqref{eq:CU} indicates that $V_c$ can be carried out in two steps: a controlled-phase operation ($V_{c\Phi}$) followed by a controlled-permutation operation ($V_{cP}$). 

The controlled-phase operation $V_{c \Phi}$ requires a somewhat intricate calculation of non-trivial phases. We therefore carry out the required algebra with the help of additional ancillary registers and then `push' the results into phases. The latter step is done by employing the unitary 
\begin{align}
U_{\text{ph}}|\varphi\rangle=\e^{-i \varphi}|\varphi\rangle \,,
\end{align}
whose implementation cost depends only on the precision with which we specify $\varphi$  and is independent of Hamiltonian parameters~\cite{NielsenChuang} (see Ref.~\cite{kalev2020simulating} for a complete derivation).
With the help of the (controlled) unitary transformation
\begin{equation}
\label{eq:F}
V_{\chi\phi}|\bold{i}_q\rangle|\bold{k}_q\rangle|x\rangle|z\rangle|0\rangle= |\bold{i}_q\rangle|\bold{k}_q\rangle|x\rangle|z\rangle|\chi_{{\bf i}_q}^{(z)} +(-1)^k\phi_{{\bf i}_q}^{(z)}\rangle \,,
\end{equation}
we can write
$V_{c{\Phi}}=V_{\chi\phi}^\dagger (\mathds{1} \otimes U_{\text{ph}}) V_{\chi\phi}$,
so that 
\begin{equation}\label{eq:ucphi}
V_{c{\Phi}}|\bold{i}_q\rangle|\bold{k}_q\rangle|x\rangle|z\rangle=|\bold{i}_q\rangle|\bold{k}_q\rangle|x\rangle\Phi_{{\bf i}_q}^{(k)}|z\rangle \,.
\end{equation}
Note that $V_{\chi\phi}$ sends computational basis states to computational basis states. We provide an explicit construction of  $V_{\chi\phi}$ in Ref.~\cite{kalev2020simulating}. 
We find that its gate cost is $ \mathcal{O}(QM(k_{od}+\log M)+QMK(C_D+C_{\Delta H_0}+C_{\Lambda}))$ and qubit cost is $\mathcal{O}(Q\log (MK)).$ Addiitonal details are provided in Sec. \ref{B_Vc}.

The construction of $V_{c P}$ is carried out by a repeated execution of the simpler unitary transformation \hbox{$U_p|i\rangle|z\rangle = |i\rangle P_i|z\rangle$}.  Recall that $P_i$ are the off-diagonal permutation operators that appear in the Hamiltonian. The gate cost of $U_p$ is therefore ${\cal O}(M (k_{\rm od} +\log M))$. Additional details may be found in  Ref.~\cite{kalev2020simulating}. 

\subsection{Algorithm cost}\label{circuitcost}
We next analyze the circuit costs for the permutation expansion algorithm. Recall that the simulation of $U(T)$ consists of two operations---$\e^{-iH_0T}$ and $U_I(T)$. The diagonal unitary $\e^{-iH_0T}$ can be implemented efficiently with a gate cost that scales linearly with the system size. To observe this, note that $H_0$ is a diagonal matrix with real diagonal elements and can be written as $H_0=\sum_{\gamma=0}^L J_{\gamma}Z_{\gamma}$, where each $Z_{\gamma}$ is a tensor product of Pauli-$Z$'s ($Z\otimes\cdots \otimes Z$) acting on at most $d$ qubits (weight-$d$ operators). Hence, we can write $\e^{-iH_0T}=\prod_{\gamma=0}^L \e^{-iJ_{\gamma}Z_{\gamma}T}$. Each $\e^{-iJ_{\gamma}Z_{\gamma}T}$ can be simulated using at most $2d$ CNOT gates with a single ancillary qubit. For example, let $Z_{\gamma}$ be a weight-$m$ ($m\leq d$) operator, then $\e^{-iJ_{\gamma}Z_{\gamma}T}$ can be implemented as 
\begin{figure}[h]
\centering
\includegraphics[width=6cm,keepaspectratio]{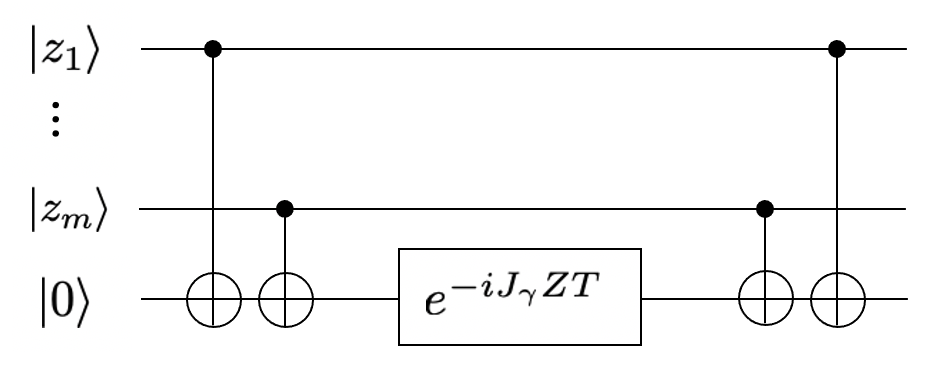}
\label{Zgate}
\end{figure}
\\
where  $|z_1\rangle \cdots |z_m\rangle$ are the qubits $Z_{\gamma}$ acts on and $|0\rangle$ is an ancillary qubit for extracting the phase. There are total $L$ such implementations for $\e^{-iH_0T}$. Therefore, the total gate cost is $\mathcal{O}(Ld)$ and the qubit cost is $\mathcal{O}(1)$. Since $L$ usually grows linearly with the system size, the gate cost also scales linearly.

\subsubsection{The cost for the state preparation and the controlled unitaries}\label{B_Vc}
The cost of implementing $U_I(T)$ resembles those in Ref.~\cite{kalev2020simulating}. The first ingredient is the preparation of state $|\psi_0\rangle$. Recall from Sec.~\ref{stateprep}, the operation that takes $|0\rangle^{\otimes Q}|0\rangle^{\otimes Q}|0\rangle$ to 
$\frac{1}{\sqrt{s} } \sum_{q=0}^{Q}\sqrt{s_q} |1\rangle^{\otimes q}|0\rangle^{\otimes (Q-q)}|1\rangle^{\otimes q}|0\rangle^{\otimes (Q-q)}$ has gate cost $\mathcal{O}(Q)$. The operation for $|1\rangle|1\rangle\to  \sum_{i=0}^{M}\sum_{k=1}^K\sqrt{\frac{||D^{(k)}_i ||_{\max}\e^{t_w \lambda_{(i,k)}} }{\Gamma(t_w)}}| i \rangle | k \rangle$  costs $\mathcal{O}(MK)$ \cite{1629135}. The total gate cost for the preparation of $|\psi_0\rangle$ (i.e., $B$) is $\mathcal{O}(QMK)$ (Lemma 8 in \cite{doi:10.1137/16M1087072}). In Appendix \ref{AltLCU}, we provide an alternative procedure that leads to a $\mathcal{O}(Q\log (MK))$ scaling for implementing $B$. The qubit cost in the state preparation is $\mathcal{O}(Q\log (MK)).$ 

The next component is the implementation of the control unitary $V_c$. As shown in~\cite{kalev2020simulating}, the gate cost of performing the control permutation $P_{\bold{i}_q}$ is $\mathcal{O}(QM(k_{od}+\log M))$, where $k_{od}$ is the ``locality,'' i.e., each permutation $P_i$ is a tensor product of at most $k_{od}$ Pauli-$X$ operators. The implementation of the control phase $\Phi^{(\bold{k}_q,w)}_{\bold{i}_q,x}$ involves the calculation of $d^{(\bold{k}_q)}_{\bold{i}_q,z}$ (the product of diagonal elements in the permutation expansion) and the divided differences (with $x_j$'s being the inputs). The cost of the former is $\mathcal{O}(QMKC_D)$, where $C_D$ is the cost of obtaining an element of $D^{(k)}_i$. The cost of later is $\mathcal{O}(QM(k_{od}+\log M)+QMK(C_{\Delta H_0}+C_{\Lambda}))$, where $C_{\Delta H_0}$ $(C_{\Lambda})$ is the cost of obtaining energy differences of $H_0$ (elements of $\Lambda^{(k)}_i$) [therefore, $C_{\Delta H_0}+C_{\Lambda}$ is the cost for obtaining the inputs $x_j$'s as defined in Eq. (\ref{defofx})]. The additional cost for the reversibility of the process scales as $\mathcal{O}(Q^2)$. A detailed discussion of the costs of $C_{ \Delta H_0}$ and $C_{\Lambda}$ may be found in Ref.~\cite{kalev2020simulating}. Combining these, we estimate the total cost for $V_c$ is 
 \begin{equation}
 \mathcal{O}(Q^2+QM(k_{od}+\log M)+QMK(C_D+C_{\Delta H_0}+C_{\Lambda})). \label{scale2}
 \end{equation} 

\subsubsection{	Overall cost of the algorithm}\label{r_cost}
The full simulation for $U_I(T)$ is a product of segments $U_I(t_w+\Delta t_w,t_w)$, where each segment is simulated by interleaving $B$ and $V_c$. The total number of segments, $r$, is determined by $T=\sum_{w=0}^{r-1}\Delta t_w$, where each $\Delta t_w$ is determined by partitioning scheme described in Sec.~\ref{timepartition}.  

As discussed above, the number of LCU applications $r$ can be upper-bounded by $r\lesssim \sum^{r-1}_{w=0} \Gamma(t_w)\Delta t_w/\ln2$ (in the long simulation time limit), which can be viewed as a discretized $L^1$-norm-like scaling with the norm of the non-static component of the Hamiltonian $V(t)$. 

Combining with the cost for simulating $\e^{-iH_0 T}$ and the cost for each step (\ref{scale2}), we conclude that at worst, the total gate cost scales as 
\begin{equation}
\mathcal{O}\left(r\left(Q^2+QM(k_{od}+\log M)+QMK(C_D+C_{\Delta H_0}+C_{\Lambda})\right)+Ld\right), \label{totalcost}
\end{equation}
and the qubit cost scales as 
\begin{equation}
\mathcal{O}(Q\log (MK)), \label{totalqubitcost}
\end{equation}
where $Q$ scales as $\mathcal{O}(\log (r/\epsilon)/\log\log(r/\epsilon))$.
For convenience, we provide a glossary of symbols in Table~\ref{tbl:glossary}. A summary of the gate and qubit costs of the simulation circuit and the various sub-routines used to construct it is given in Table~\ref{tbl:resource}. \vspace{0.3cm}
\begin{table*}[t!]
\begin{center}
\begin{tabular}{ |c|l|}
  \hline
  Symbol&Meaning\\
  \hline
  $M$ & the number of permutation expansion terms of the non-static Hamitonain, c.f., Eq.~\eqref{Vexpand1} \\
  $K$ & the length of exponential sum expansion, c.f., Eq.~\eqref{Vexpand2} \\
  $r$ & the number partitions, c.f., Sec.~\ref{timepartition}\\
  $Q$ & the series expansion truncation order, $Q={\cal O}\Bigl(\frac{\log (r/\epsilon)}{\log \log (r/\epsilon)}\Bigr)$\\
$k_{\rm od}$ & the upper bound on the locality 
of $P_i$  \\
$C_{D}$ & the cost of obtaining an element of $D^{(k)}_i$  \\
$C_{\Delta H_0}$ & the cost of obtaining energy differences of $H_0$    \\
$C_{\Lambda}$ & the cost of obtaining an element of $\Lambda^{(k)}_i$   \\
$L$ & the number of terms in the static Hamiltonian, i.e., $H_0=\sum_{\gamma=0}^L J_{\gamma}Z_{\gamma}$ \\
$d$ & the locality 
of $Z_{\gamma}$   \\
    \hline
 \end{tabular}
\end{center}
\caption{\label{tbl:glossary}{\bf Glossary of symbols.} }
\end{table*}

\begin{table*}[t!]
\begin{center}
\begin{tabular}{ |c||c|c|c| }
  \hline
  Unitary &Gate cost&Qubit cost\\
  \hline \hline
 $\e^{-i H_0 T}$ &${\cal O}(Ld)$ & ${\cal O}(1)$   \\\hline
$V_c$ &  $\mathcal{O}(Q^2+QM(k_{od}+\log M)+QMK(C_D+C_{\Delta H_0}+C_{\Lambda}))$ & ${\cal O}(Q\log MK)$    \\\hline
$U_I(T)$ & $\mathcal{O}\left(r\left(Q^2+QM(k_{od}+\log M)+QMK(C_D+C_{\Delta H_0}+C_{\Lambda})\right)\right)$ & ${\cal O}(Q\log MK)$    \\\hline
\end{tabular}
\end{center}
\caption{\label{tbl:resource}{\bf A summary of resources for the circuit.}}
\end{table*}

\subsubsection{Example advantages of the algorithm}\label{advantages}
To illustrate how our simulation algorithm can provide speedups over existing algorithms, we focus in this subsection on two types of Hamiltonian systems: highly oscillating systems and decaying systems. 

The cost of our algorithm is independent of the oscillation rates of the dynamics, whereas the cost of any simulation algorithm (e.g., \cite{Berry2020timedependent,Berry2019,low2019hamiltonian,PhysRevLett.106.170501}) that depends on $||dH(t)/dt ||$ would depend on oscillation rates of the system. To illustrate this advantage, consider a two-level system with a Hamiltonian 
\begin{equation}
H(t)=h Z+\Gamma \left(\e^{-i \alpha t }|0\rangle \langle 1|+ \e^{i\alpha t}|1\rangle\langle 0|\right)=H_0+V(t), 
\end{equation}
where $h,\Gamma,\alpha\in \mathbb{R}$, $H_0=h Z$ and $V(t)=\Gamma \left(\e^{-i \alpha t }|0\rangle \langle 1|+ \e^{i\alpha t}|1\rangle\langle 0|\right)$. In this case, we have $k_{od}=M=K=1$ and $\lambda=0$. The gate cost of simulating $U(T)$ scales as 
\begin{equation}
\mathcal{O}\left(\Gamma T\left(\frac{\log \Gamma T/\epsilon}{\log\log\Gamma T/\epsilon}\right)^2\right),
\end{equation}
which is independent of $\alpha$. This means the simulation cost remains the same even if $\alpha$ becomes arbitrarily large.  One can realize the absence of $\alpha$ owing to the fact that phases are explicitly integrated out into an integral-free expansion series, where the bound of each term does not depend on the oscillations (due to Identity 2).  Therefore, our simulation can be significantly more effective when the time dependence of the Hamiltonian has very high frequencies. Note that while the example above was given for a simple qubit system with pure oscillation, the frequency-independence in cost holds for any system.

Another class of systems for which our algorithm can provide seedup are Hamiltonians with exponential decays, i.e., $\lambda< 0$.  For concreteness, consider the Hamiltonian 
\begin{equation}
H(t)=hZ+\Gamma \e^{-\alpha t}X=H_0+V(t),
\end{equation}
where $h,\Gamma\in\mathbb{R}$ and $\alpha>0$ and $H_0=hZ$ and $V(t)=\Gamma \e^{-\alpha t}X$.  In this case, $\lambda=-\alpha$ and $||V(t)||_{\max}=\Gamma \e^{-\alpha t}$. 

The $L^1$-norm defined in  Ref.~\cite{Berry2020timedependent} is $\int^{T}_0||H(t)||_{\max}dt$, which has a linear scaling $\mathcal{O}(hT)$ with the simulation duration $T$, whereas our discretized $L^1$-norm $\sum^{r-1}_{w=0} ||V(t_w)||_{\max}\Delta t_w$ tends to a constant in the long time limit. This can be seen from the fact that the partition terminates at a large enough time $t_w$ ($\leq T$), where $\Delta t_w=T-t_w$ becomes the final simulation step, as described in Sec. \ref{timepartition}.  The above results also hold for any combination of exponential decays (even when these are multiplied by oscillatory terms) with which different time decay dependencies may be constructed.

\subsection{Hamiltonians with arbitrary time dependence}\label{generaltimedependence}
The simulation algorithm invokes a switch to the interaction picture, by dividing the Hamiltonian into a static diagonal part $H_0$ and a time-dependent Hermitian operator $V(t)$. The $V(t)$ is expanded using permutations and exponential sums as presented in Eq. (\ref{Vexpand2}).  There, we assume that the time dependence can be expressed as exponential sums with a finite number of terms, $K$. Although this assumption holds for many models  (e.g., when the time dependencies are some combinations of trigonometric functions and exponential decays), the exponential series generally requires an infinite sum (e.g., a Fourier series).  A straightforward procedure to obtain a finite sum approximation is via a truncated Fourier series. As an example, let us consider a polynomial function of time, i.e., $f(t)=\sum_{l=0}^p c_l t^l$. Using the proof of Theorem 8.14 in Ref.~\cite{Rudin}, it can be shown that a truncated Fourier series of $f(t)$ is $\mathcal{O}(\epsilon)$ close to $f(t)$ when the truncation order is $\mathcal{O}(1/\epsilon)$. We also note that, other than Fourier series, there have been numerous studies~\cite{Norvidas2010,Beylkin2005,Beylkin2010,Braess2009,Wiscombe1977} regarding finding an exponential-sum approximation of a function. Some of them, e.g.,~\cite{Beylkin2005}, provide efficient algorithms with logarithmically scaling terms (with respect to the inverse of a required accuracy). These results suggest that efficient methods for finding the exponential-sum decompositions of the time dependences of $V(t)$ can exist in many cases. 

Suppose that $V(t)$ is approximated by a finite series of exponential sum. The resulting error of the unitary evolution, due to the Hamiltonian approximation, scales only at most linearly with the evolution duration.  This can be shown using the following property.  Given two time-dependent Hamiltonians $H_1(t)$ and $H_2(t)$ such that
\begin{equation}
|| H_1(t)-H_2(t)||\leq \epsilon \ \ \text{for all }t\in[0,T], 
\end{equation}
then 
\begin{equation}\label{UBound}
||U_1(T,0)-U_2(T,0) ||\equiv \left| \left| \mathcal{T}\text{exp}\left[-i\int_0^T H_1(t)dt \right] - \mathcal{T}\text{exp}\left[-i\int_0^T H_2(t)dt \right] \right|\right|\leq \epsilon T. 
\end{equation}
This holds true for any norm $||\cdot||$. Before proving this, we first note a property of the so-called Subadditivity of error in implementing unitaries~\cite{NielsenChuang}. It says that for unitaries $U_1, U_2, V_3$ and $V_4$, we have 
\begin{equation}
||U_2U_1-V_2V_1 ||\leq ||U_2-V_2 ||+||U_1-V_1 ||.
\end{equation}
This can be easily shown by
\begin{align}
||U_2U_1-V_2V_1 ||&=||U_2U_1-V_2U_1+V_2U_1 -V_2V_1 || \nonumber \\
&\leq ||(U_2-V_2)U_1 ||+||V_2(U_1-V_1) ||\leq ||U_2-V_2 ||+||U_1-V_1 ||, 
\end{align}
where the basic operator norm inequalities are used. Now we prove the bound in Eq.~(\ref{UBound}). We divide $T$ into $n$ segments such that each segment has width $T/n$. We can rewrite the time evolution operators as 
\begin{align}
&U_{1}(T,0)=U_{1}\left(T,\frac{n-1}{n}T\right)\cdots U_{1}\left(\frac{T}{n},0\right) \nonumber\\
&U_{2}(T,0)=U_{2}\left(T,\frac{n-1}{n}T\right)\cdots U_{2}\left(\frac{T}{n},0\right) \nonumber.
\end{align}
Repeatedly using the subadditivity of error, we have
\begin{align}
&||U_1(T,0)-U_2(T,0) ||\leq \sum_{m=1}^n \left|\left|U_1\left(\frac{m T}{n}, \frac{(m-1)T}{n} \right)-U_2\left(\frac{m T}{n}, \frac{(m-1)T}{n}  \right)\right|\right| \nonumber\\
&=\sum_{m=1}^n \left|\left|   -i\int_{\frac{(m-1) T}{n}}^{\frac{mT}{n}} \left[H_1(t)-H_2(t)\right]dt+ (-i)^2\int^{\frac{mT}{n}} _{\frac{(m-1) T}{n}} dt_2\int_{\frac{(m-1) T}{n}}^{t_2} dt_1 \left[H_1(t_2)H_1(t_1)-H_2(t_2)H_2(t_1)\right]+\cdots\right|\right| \nonumber\\
&\leq\sum_{m=1}^n \epsilon \frac{T}{n}+ \sum_{m=1}^n\mathcal{O}\left[\left(\frac{T}{n}\right)^2\right]= \epsilon T + \mathcal{O}\left[\frac{T^2}{n}\right].
\end{align}
Since this inequality holds for any $n$, we can take $n\to \infty$ and it yields Eq.~(\ref{UBound}) as claimed. 

Now we apply this property to the simulation of $U_I(T)$. Suppose that we have an $\tilde{\delta}-$accurate approximation of $V(t)$, i.e., $||\tilde{V}(t)-V(t)||\leq \tilde{\delta}$ for all $t\in[0,T]$, where $\tilde{V}(t)$ is the finite exponential-sum approximation of $V(t)$. The accumulative error from this approximation is bounded by $\tilde{\delta}T$ and the overall error is $\mathcal{O}(\tilde{\delta} T+\delta)$, where $\delta$ is the error from LCU implementation. Recall that $\tilde{\delta}$ is closely related to $K$ (the number of terms). Although it is intuitive that a larger $K$ can allow for a smaller $\tilde{\delta}$, the explicit relation between the two largely depends on the model and the expansion method. Nonetheless, we can expect $K$ to scale at least linearly with $1/\tilde{\delta}$ for many cases, e.g., aforementioned truncated Fourier series for a polynomial. 

The simulation cost also depends on $M$, the number of terms in the permutation expansion. This quantity usually scales linearly with the system size and can be easily determined. For example, a typical spin model usually involves a sum of tensor products of Pauli-$X$'s (or $Y$'s) and Pauli-$Z$'s. Each tensor product represents an interaction between qubits on certain lattice sites. Due to the common locality constraint that prevents a qubit interacting with the ones arbitrarily far apart, the number of interacting terms, $M$, scales at most linearly with the number of qubits. In addition, a tensor product of Pauli operators can be easily separated into a product of diagonal matrix and a permutation, e.g., $X\otimes X\otimes Y= (I\otimes I\otimes -iZ)(X\otimes X\otimes X)$. We conclude that $M$ will have modest linear scaling for most practical models.

\section{Alternative scheme and reduction to the time-independent case}\label{TIndep}
In this section, we provide an alternative yet equivalent scheme for the dynamical simulation, one that will allow us to establish an immediate connection to the time-independent Hamiltonian simulation formalism (specifically to the scheme presented in Ref.~\cite{kalev2020simulating}), in which $H(t)$ is assumed constant in time.  

In previous sections, we have chosen to partition the interaction-picture unitary $U_I(T)$ into short time segments and then follow its execution by the application of a diagonal $\e^{i H_0 T}$ bringing it back to the Schr{\"o}dinger picture. Here, we show that the Schr{\"o}dinger picture $U(T)$ can be partitioned similarly. 

Recalling the expansion of $U_I(t_w+\Delta t_w, t_w)$ in Eq.~(\ref{U_t_dt}), we have
\begin{align}
U_I(t_w+\Delta t_w, t_w)|z\rangle =\sum_{q=0}^{\infty}\sum_{\bold{i}_q}\sum_{\bold{k}_q}(-i)^q \e^{t_w x_1} \e^{\Delta t_w[x_1,x_2,\cdots, x_q,0]} d^{(\bold{k}_q)}_{\bold{i}_q,z} P_{\bold{i}_q}|z\rangle, \nonumber
\end{align}
with 
\begin{equation}
x_1=i\left(E_{z_{\bold{i}_q}}-E_{z}\right)+\sum_{l=1}^q \lambda^{(k_l)}_{i_l,z_{\bold{i}_l}}. \nonumber
\end{equation}
Breaking the $\e^{t_w x_1} $ phase, we get:
\begin{align}
&U_I(t_w+\Delta t_w, t_w)|z\rangle \nonumber \\
&=\sum_{q=0}^{\infty}\sum_{\bold{i}_q}\sum_{\bold{k}_q}(-i)^q 
\e^{t_w \sum_{l=1}^q \lambda^{(k_l)}_{i_l,z_{\bold{i}_l}}}
\e^{-i t_w E_z} 
\e^{i(t_w+\Delta t_w) E_{z_{\bold{i}_q}}} 
\e^{-i \Delta t_w  E_{z_{\bold{i}_q}}} 
\e^{\Delta t_w[x_1,x_2,\cdots, x_q,0]} d^{(\bold{k}_q)}_{\bold{i}_q,z} P_{\bold{i}_q}|z\rangle. 
\end{align}
We find that 
\begin{equation}
\e^{-i H_0 (t_w+\Delta t_w)} U_I(t_w+\Delta t_w, t_w) =\widetilde{U}_I(t_w+\Delta t_w, t_w) \e^{-i H_0 t_w}
\end{equation}
where
\begin{equation}
\widetilde{U}_I(t_w+\Delta t_w, t_w)=
\sum_z \sum_{q=0}^{\infty}\sum_{\bold{i}_q}\sum_{\bold{k}_q}(-i)^q 
\e^{t_w \sum_{l=1}^q \lambda^{(k_l)}_{i_l,z_{\bold{i}_l}}}
\e^{-i \Delta t_w  E_{z_{\bold{i}_q}}} 
\e^{\Delta t_w[x_1,x_2,\cdots, x_q,0]} d^{(\bold{k}_q)}_{\bold{i}_q,z} P_{\bold{i}_q}|z\rangle \langle z| \,.
\end{equation}
Inspecting the full unitary evolution, we observe 
\begin{align}
U(T)&= \e^{-i H_0 T}U_I(T) \nonumber\\
&= \e^{-i H_0 T} U_I(T, t_{r-1})U_I(t_{r-1},t_{r-2})\cdots U_I(t_1,0)=\widetilde{U}_I(T, t_{r-1})D(t_{r-1})U_I(t_{r-1},t_{r-2})\cdots U_I(t_1,0).
\end{align}
The evolution operator $U(T)$ can be simplifies as 
\begin{equation}
U(T)= \widetilde{U}_I(T, t_{r-1})\widetilde{U}_I(t_{r-1},t_{r-2})\cdots \widetilde{U}_I(t_1,0) \,, 
\end{equation}
eliminating the diagonal piece. Each $\widetilde{U}_I(t_w+\Delta t_w, t_w)$ can be rewritten as:
\begin{align}
&\widetilde{U}_I(t_w+\Delta t_w, t_w) \nonumber\\
&=\sum_z \sum_{q=0}^{\infty}\sum_{\bold{i}_q}\sum_{\bold{k}_q}(-i)^q 
\e^{(t_w+\Delta t_w) \sum_{l=1}^q \lambda^{(k_l)}_{i_l,z_{\bold{i}_l}}}
\e^{-\Delta t_w  (i E_{z_{\bold{i}_q}}+\sum_{l=1}^q \lambda^{(k_l)}_{i_l,z_{\bold{i}_l}})} 
\e^{\Delta t_w[x_1,x_2,\cdots, x_q,0]} d^{(\bold{k}_q)}_{\bold{i}_q,z} P_{\bold{i}_q}|z\rangle \langle z| \,.
\end{align}
The factor $\e^{-\Delta t_w  (i E_{z_{\bold{i}_q}}+\sum_{l=1}^q \lambda^{(k_l)}_{i_l,z_{\bold{i}_l}})} $ can be absorbed into the divided difference:
\begin{equation}
\widetilde{U}_I(t_w+\Delta t_w, t_w)=
\sum_z \sum_{q=0}^{\infty}\sum_{\bold{i}_q}\sum_{\bold{k}_q}(-i)^q 
\e^{(t_w+\Delta t_w) \sum_{l=1}^q \lambda^{(k_l)}_{i_l,z_{\bold{i}_l}}}
\e^{\Delta t_w[\tilde{y}_1,\tilde{y}_2,\cdots, \tilde{y}_q,\tilde{y}_{q+1}]} d^{(\bold{k}_q)}_{\bold{i}_q,z} P_{\bold{i}_q}|z\rangle \langle z| \,.
\end{equation}
with
\begin{equation}
\tilde{y}_j = x_j -\left(i E_{z_{\bold{i}_q}}+\sum_{l=1}^q \lambda^{(k_l)}_{i_l,z_{\bold{i}_l}}\right) 
=i\left(E_{z_{\bold{i}_q}}-E_{z_{\bold{i}_{j-1}}}\right)+\sum_{l=j}^q \lambda^{(k_l)}_{i_l,z_{\bold{i}_l}}
-\left(i E_{z_{\bold{i}_q}}+\sum_{l=1}^q \lambda^{(k_l)}_{i_l,z_{\bold{i}_l}}\right) \,,
\end{equation}
which simplifies to
\begin{equation}
\tilde{y}_j =-i E_{z_{\bold{i}_{j-1}}}-\sum_{l=1}^{j-1} \lambda^{(k_l)}_{i_l,z_{\bold{i}_l}} \,.
\end{equation}
By inserting additional $i \Delta t_w E_z$ phases into the divided differences, we can rewrite
\begin{equation}
\widetilde{U}_I(t_w+\Delta t_w, t_w) =\left(
\sum_z \sum_{q=0}^{\infty}\sum_{\bold{i}_q}\sum_{\bold{k}_q}(-i)^q 
\e^{(t_w+\Delta t_w) \sum_{l=1}^q \lambda^{(k_l)}_{i_l,z_{\bold{i}_l}}}
\e^{\Delta t_w[y_1,y_2,\cdots, y_q,y_{q+1}]} d^{(\bold{k}_q)}_{\bold{i}_q,z} P_{\bold{i}_q}|z\rangle \langle z|  \right) \e^{-i H_0 \Delta t_w} \,.
\end{equation}
with
\begin{equation}
y_j =\tilde{y}_j + i E_z = -i (E_{z_{\bold{i}_{j-1}}} - E_z) -\sum_{l=1}^{j-1} \lambda^{(k_l)}_{i_l,z_{\bold{i}_l}}
=-i \Delta E_{z_{\bold{i}_{j-1}}}-\sum_{l=1}^{j-1} \lambda^{(k_l)}_{i_l,z_{\bold{i}_l}}\,.
\end{equation}
Now, we can write $U(T)$ as alternating off-diagonal and diagonal unitaries:
\begin{equation}
U(T) =\prod_w\widetilde{U}_I(t_w+\Delta t_w, t_w)\equiv\prod_w  U_{\textrm{od}}(t_w+\Delta t_w, t_w) \e^{-i H_0 \Delta t_w}\,.
\end{equation}
When $H(t)$ becomes time-independent, $ \lambda^{(k_l)}_{i_l,z_{\bold{i}_l}}=0$ and $\Delta t_w=\Delta t=\ln 2/\Gamma $. To synchronize the notation with \cite{kalev2020simulating}, we identify $H_0=D_0$ and $U_{\textrm{od}}(t_w+\Delta t_w, t_w)=U_{\textrm{od}}$. The evolution operator becomes $U(T)=U_{\textrm{od}}\e^{-iD_0 \Delta t}\cdots U_{\textrm{od}}\e^{-iD_0 \Delta t}$, which coincides with \cite{kalev2020simulating}.

\section{Conclusions}\label{sec:conc}
We presented a quantum algorithm for simulating the evolution operator generated from a time-dependent Hamiltonian. The algorithm involves a permutation expansion for the interaction Hamiltonian, a switch to the interaction-picture,  and the incorporation of the LCU technique. Combining the permutation expansion with the Dyson series has led to an integral-free representation for the interaction-picture unitary with coefficients involving the notion of divided differences with complex inputs. 


 We found that our expansion  allowed us to adjust the time steps based on the dynamical characteristics of the Hamiltonian, providing a resource saving as compared to the equal-size partition with the largest bound. This further resulted in a gate resource that scales  with an $L^1$-norm-like scaling with respect only to the `non-static' norm of the Hamiltonian. 

Specifically, we demonstrated that for systems with a decaying non-static component, the resources do not scale with the total evolution time asymptotically. Furthermore, the simulation cost is independent of the frequencies, implying a significant advantage for systems with highly oscillating components.

\begin{acknowledgments}
This work is supported by the U.S. Department of Energy (DOE), Office of Science, Basic Energy Sciences (BES) under Award No. DE-SC0020280.
\end{acknowledgments}
\bibliography{rf}

\appendix
\section*{Appendix}
\section{Properties of divided difference}\label{DDapendix}
We begin with a formal definition of divided difference for complex-valued functions and follow with some properties that will be of use to us when deriving the new bound. The main results are derived for the exponential functions.
\begin{define}\label{def1}
Let $\mathbb{U}$ be an open subset of $\mathbb{C}$, and $f:\mathbb{U}\to\mathbb{C}$ is analytic in $\mathbb{U}$. For any non-negative integer $q$ and $x_0, x_1,\cdots,x_q\in \mathbb{U}$, the divided difference of $f$ is denoted as $f[x_0, x_1,\cdots,x_q]$. If $q=0$, $f[x_0]\equiv f(x_0)$. Suppose $\{x_0, x_1,\cdots,x_q\}$ has $r$ distinct elements. Let $S=\{x_{\sigma(0)},x_{\sigma(1)},\cdots,x_{\sigma(q)}\}$ be a sorted set of $\{x_0, x_1,\cdots,x_q\}$, i.e., there exists a permutation $\sigma$ such that the first $n_1$ elements of $S$ are equal and the following $n_2$ elements of $S$ are equal and so on and so forth. There are $r$ same-element clusters and $\sum_{i=1}^{r}n_i=q+1$. The divided difference of $f$ is defined as
$$
f[x_0, x_1,\cdots,x_q]=
\begin{cases}
\frac{f[x_{\sigma(1)},\cdots,x_{\sigma(q)}]-f[x_{\sigma(0)},\cdots,x_{\sigma(q-1)}]}{x_{\sigma(q)}-x_{\sigma(0)}} & \text{if } r>1, \\
\frac{f^{(q)}(x_0)}{q!} & \text{if } r=1,
\end{cases}
$$
where $f^{(q)}$ denotes the $q$th derivative of $f$.
\end{define}
Although the above sorting procedure is not unique, it can be shown that any choice of the permutation gives the same result, and hence the definition is well-defined. 

The divided difference involves a recursive relation that connects a $q+1$ input case to two $q$ cases. For $q=1$,
\begin{align}
f[x_0, x_1]=
\begin{cases}
\frac{f(x_1)-f(x_0)}{x_1-x_0} & \text{if } x_0\neq x_1, \\
 f'(x_0) & \text{if } x_0=x_1.
\end{cases}
\end{align}
For $q=2$, and suppose $x_0$, $x_1$ and $x_2$ are all distinct,
\begin{align}
f[x_0,x_1,x_2]&=\frac{\frac{f(x_2)-f(x_1)}{x_2-x_1}-\frac{f(x_1)-f(x_0)}{x_1-x_0}}{x_2-x_0} \nonumber\\
&=\frac{f(x_0)}{(x_0-x_1)(x_0-x_2)}+\frac{f(x_1)}{(x_1-x_2)(x_1-x_0)}+\frac{f(x_2)}{(x_2-x_0)(x_2-x_1)}. 
\end{align}
In fact, it can be shown that for distinct $x_0,x_1,\cdots,x_q$, 
\begin{align}
f[x_0,x_1,\cdots,x_q]=\sum^{q}_{i=0}\frac{f(x_i)}{\prod_{k\neq i}(x_i-x_k)}.
\end{align}

\begin{rmk}
Since any analytic function admits a Taylor expansion representation and the divided difference is a linear functional, the divided difference of an analytic function $f$ has a series expansion form, i.e., for $x_0,\cdots,x_q$ and $y$ in $f$'s analytic domain,
\begin{equation}
f[x_0,\cdots,x_q]=\sum_{n=0}^{\infty}\frac{f^{(n)}(y)}{n!}p_{n|y}[x_0,\cdots,x_q],
\end{equation}
where $p_{n|y}(x)\equiv (x-y)^n$. Because $p_{n|y}[x_0,\cdots,x_q]=0$ for all $n<q$, the non-vanishing term of the series starts from the $q$th order.
\end{rmk}

For simplicity, we denote the divided difference for the exponential function as $\e^{[x_0,\cdots,x_q]}$, i.e.,
\begin{equation}
\e^{[x_0,\cdots,x_q]}\equiv f[x_0,\cdots,x_q], \ \text{where} \ f(x)=\e^x.
\end{equation}

\begin{prpty}\label{pty0}
For any non-negative integer $q$ and $x_0, x_1,\cdots,x_q\in \mathbb{C}$,
\begin{equation}
\e^{[x_0,x_1,\cdots,x_q]}=\e^{x_0}\e^{[0,x_1-x_0,\cdots,x_q-x_0]}.
\end{equation}
\end{prpty}
This property and the fact that divided differences are permutation symmetric among inputs imply that any input can be factored out of the divided difference by subtracting it from every entry.

\begin{prpty}\label{pty2}
For any non-negative integer $q$ and $x_0, x_1,\cdots,x_q\in \mathbb{C}$,
\begin{equation}
\e^{[x_0,x_1,\cdots,x_q]}=\displaystyle\sum^{\infty}_{n=q}\frac{1}{n!}\sum_{\sum k_j=n-q}\prod^{q}_{j=0}(x_j)^{k_j}. 
\end{equation}
\end{prpty}
An equivalent definition of divided difference for an analytic function is via its Taylor expansion. It amounts to apply divided difference on every order of the series. Since any polynomial of order less than $q$ is annihilated, the series starts from the order $q$. $Property$ \ref{pty2} is derived from the Taylor expansion of $\e^x$ with respect to the origin. 

\begin{lemma}\label{lemma1}
For any non-negative integer $q$ and $x_0, x_1,\cdots,x_q\in \mathbb{C}$,
\begin{equation}
\int_0^1 a^q {\rm e}^{[ax_0,ax_1,\cdots,ax_q]}da={\rm e}^{[0,x_0,x_1,\cdots,x_q]}.
\end{equation}
\end{lemma}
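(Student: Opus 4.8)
The plan is to prove the identity by matching the power-series representations of the two sides, taking Property~\ref{pty2} as the workhorse. First I would expand the integrand on the left. Applying Property~\ref{pty2} to the rescaled arguments $ax_0,\ldots,ax_q$ and using that the constraint $\sum_{j}k_j=n-q$ fixes the total power of $a$, I get
\begin{equation}
\e^{[ax_0,\ldots,ax_q]}=\sum_{n=q}^{\infty}\frac{a^{n-q}}{n!}\sum_{\sum_{j}k_j=n-q}\prod_{j=0}^{q}x_j^{k_j}.
\end{equation}
Multiplying by $a^q$ promotes the prefactor $a^{n-q}$ to $a^n$.

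Next I would integrate term by term over $a\in[0,1]$. With $\int_0^1 a^n\,da=1/(n+1)$ this yields
\begin{equation}
\int_0^1 a^q \e^{[ax_0,\ldots,ax_q]}\,da=\sum_{n=q}^{\infty}\frac{1}{(n+1)!}\sum_{\sum_{j}k_j=n-q}\prod_{j=0}^{q}x_j^{k_j}.
\end{equation}

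To finish, I would expand the right-hand side independently and check that it coincides with the above. Applying Property~\ref{pty2} to the $q+2$ inputs $0,x_0,\ldots,x_q$ (i.e.\ with $q$ replaced by $q+1$), every monomial carrying a positive power of the $0$-entry vanishes, so only the terms in which that entry appears with exponent zero survive. Relabeling the surviving exponents as $k_0,\ldots,k_q$ and shifting the outer summation index $n\mapsto n+1$ converts the resulting series for $\e^{[0,x_0,\ldots,x_q]}$ into exactly the expression obtained for the integral, establishing the claim.

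The main obstacle is justifying the interchange of summation and integration in the second step. Because $\e^z$ is entire, the divided-difference series of Property~\ref{pty2} converges absolutely and uniformly for $a$ ranging over the compact interval $[0,1]$ with the $x_j$ held fixed, and this uniform convergence is what licenses term-by-term integration. Beyond that, the only care needed is purely combinatorial bookkeeping: verifying that the zero input kills precisely the unwanted multi-indices and that the index shift aligns the two summation constraints. An alternative route would invoke the scaling relation $a^q\e^{[ax_0,\ldots,ax_q]}=\e^{a[x_0,\ldots,x_q]}$ noted in the main text together with Lemma/Identity~\ref{identity1}, but the series comparison above is the most self-contained given the tools already established.
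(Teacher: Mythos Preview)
Your proposal is correct and follows essentially the same route as the paper: expand the integrand via Property~\ref{pty2}, integrate term by term, and identify the result with the series expansion of $\e^{[0,x_0,\ldots,x_q]}$. You are somewhat more explicit than the paper in spelling out the final identification (the paper simply asserts ``the last equality follows from the series expansion representation of $\e^{[0,x_0,\ldots,x_q]}$''), and you add a remark justifying the interchange of sum and integral, which the paper omits; but the argument is the same.
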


\begin{proof}
This can be observed from the series expansion of the divided difference for the exponential function, i.e., from $Property$ \ref{pty2},
\begin{align}
a^q \e^{[ax_0,ax_1,\cdots,ax_q]} &=a^q\displaystyle\sum^{\infty}_{n=q}\frac{1}{n!}\sum_{\sum k_j=n-q}\prod^{q}_{j=0}(ax_j)^{k_j} \nonumber\\
&=a^q\displaystyle\sum^{\infty}_{n=q}\frac{1}{n!}\sum_{\sum k_j=n-q}a^{n-q}\prod^{q}_{j=0}(x_j)^{k_j}=\displaystyle\sum^{\infty}_{n=q}\frac{a^n}{n!}\sum_{\sum k_j=n-q}\prod^{q}_{j=0}(x_j)^{k_j}. 
\end{align}
Performing term-by-term integration over $a$ on both side, we have 
\begin{align}
\int_0^1 a^q \e^{[ax_0,ax_1,\cdots,ax_q]}da&= \displaystyle\sum^{\infty}_{n=q}\left(\int_0^1\frac{a^n}{n!}da\right) \sum_{\sum k_j=n-q}\prod^{q}_{j=0}(x_j)^{k_j} \nonumber\\
&=\displaystyle\sum^{\infty}_{n=q}\frac{1}{(n+1)!}\sum_{\sum k_j=n-q}\prod^{q}_{j=0}(x_j)^{k_j}= \e^{[0,x_0,x_1,\cdots,x_q]},
\end{align} 
where the last equality follows from the series expansion representation of $\e^{[0,x_0,x_1,\cdots,x_q]}$. This completes the proof.
\end{proof}

\begin{cor}\label{cor1}
Let $f(x)={\rm e}^{tx}$, where $t\in\mathbb{R}$ and $x\in\mathbb{C}$. We denote ${\rm e}^{t[x_0,\cdots,x_q]}\equiv f[x_0,\cdots,x_q]$, where $x_0,\cdots,x_q\in\mathbb{C}$. For any $\tau\in\mathbb{R}$,
\begin{equation}
\int^{\tau}_0 {\rm e}^{t[x_0,\cdots,x_q]}dt={\rm e}^{\tau[0,x_0,\cdots,x_q]}.
\end{equation}
\end{cor}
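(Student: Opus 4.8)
The plan is to reduce the claim to Lemma~\ref{lemma1} by a single change of variables, using the homogeneity identity for divided differences of the exponential. The key ingredient is the scaling relation $\e^{t[x_0,\cdots,x_q]}=t^q\,\e^{[tx_0,\cdots,tx_q]}$, already invoked in the main text. This follows from the general homogeneity of divided differences — for $g(x)=f(cx)$ one has $g[x_0,\cdots,x_q]=c^q f[cx_0,\cdots,cx_q]$, as is evident already at $q=1$ from $\tfrac{g(x_1)-g(x_0)}{x_1-x_0}=c\,\tfrac{f(cx_1)-f(cx_0)}{cx_1-cx_0}$ — specialized to $f(x)=\e^x$ and $c=t$; it is equally immediate from the series in Property~\ref{pty2}.

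First I would rewrite the integrand via this scaling relation, so that $\int_0^\tau \e^{t[x_0,\cdots,x_q]}\,dt=\int_0^\tau t^q\,\e^{[tx_0,\cdots,tx_q]}\,dt$. Next I would substitute $t=\tau a$, with $dt=\tau\,da$ and $a$ ranging over $[0,1]$, which pulls out a factor of $\tau^{q+1}$ and recasts the integral as $\tau^{q+1}\int_0^1 a^q\,\e^{[a(\tau x_0),\cdots,a(\tau x_q)]}\,da$. Now Lemma~\ref{lemma1}, applied not to $x_0,\cdots,x_q$ but to the rescaled points $\tau x_0,\cdots,\tau x_q$, evaluates the remaining integral as $\e^{[0,\tau x_0,\cdots,\tau x_q]}$, leaving $\tau^{q+1}\e^{[0,\tau x_0,\cdots,\tau x_q]}$.

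To finish, I would invoke the scaling relation once more, now in reverse and at order $q+1$: applying it to the $q+2$ inputs $0,x_0,\cdots,x_q$ gives $\e^{\tau[0,x_0,\cdots,x_q]}=\tau^{q+1}\e^{[0,\tau x_0,\cdots,\tau x_q]}$, since $\tau\cdot 0=0$. Comparing with the previous expression yields $\int_0^\tau \e^{t[x_0,\cdots,x_q]}\,dt=\e^{\tau[0,x_0,\cdots,x_q]}$, as claimed.

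The only place requiring care — and the single likely source of error — is the bookkeeping of the power of $\tau$: prepending the entry $0$ raises the order of the divided difference from $q$ to $q+1$, so the scaling relation must be used with exponent $q$ on the way in and $q+1$ on the way out. All of the analytic content is carried by Lemma~\ref{lemma1}; everything else is a substitution and a consistent tracking of this exponent. Alternatively, one could bypass the change of variables and prove the identity termwise from the series of Property~\ref{pty2}, integrating $\sum_{n\geq q}\tfrac{t^n}{n!}(\cdots)$ over $[0,\tau]$ and matching the result against the series for $\e^{\tau[0,x_0,\cdots,x_q]}$; this route is more computational but avoids any appeal to the homogeneity identity.
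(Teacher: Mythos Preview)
Your argument is correct. The reduction to Lemma~\ref{lemma1} via the homogeneity identity $\e^{t[x_0,\cdots,x_q]}=t^q\e^{[tx_0,\cdots,tx_q]}$ and the substitution $t=\tau a$ is clean, and the bookkeeping of the exponent ($q$ going in, $q+1$ coming out because of the prepended $0$) is handled properly.

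This is, however, a genuinely different route from the paper's. The paper does not reduce Corollary~\ref{cor1} to Lemma~\ref{lemma1}; instead it proceeds exactly along the lines of your ``alternative'': expand both sides via Property~\ref{pty2} and integrate termwise, mirroring the proof of Lemma~\ref{lemma1} itself rather than invoking it. Your approach is more structural --- it isolates the homogeneity identity as the only new ingredient and then lets Lemma~\ref{lemma1} do all the work, so nothing about the series needs to be touched again. The paper's approach is more self-contained in that it never appeals to the scaling relation (which in the main text is stated but not proved), at the cost of essentially repeating the series computation. Both are short; yours makes the logical dependence on Lemma~\ref{lemma1} explicit, which is arguably the point of labeling the statement a corollary.
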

This can be verified by evaluating the series expansion form on both side, by a similar manner in the proof of Lemma \ref{lemma1} .

With these properties, we are ready to prove the bound in Identity~\ref{bound1} in the main context. 
\setcounter{thm}{0}
\begin{thm}\label{appendixthm1}
For any non-negative integer $q$ and $x_0, x_1,\cdots,x_q\in \mathbb{C}$,
\begin{equation}
\left| \e^{[x_0,x_1,\cdots,x_q]}\right|\leq \e^{[\Re(x_0),\Re(x_1),\cdots,\Re(x_q)]}, \label{bound}
\end{equation}
where $\Re(\cdot)$ gives the real part of the input.
\end{thm}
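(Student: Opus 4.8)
The plan is to derive the bound from the Hermite--Genocchi integral representation of the exponential divided difference, whose integration weights are real and nonnegative. I would first record the symmetric form
\begin{equation}
\e^{[x_0,x_1,\cdots,x_q]}=\int_{\Delta_q}\e^{\,s_0 x_0+s_1 x_1+\cdots+s_q x_q}\,d\mu(s),
\end{equation}
where $\Delta_q=\{(s_0,\cdots,s_q):s_j\geq 0,\ \sum_{j} s_j=1\}$ is the standard $q$-simplex and $\mu$ is Lebesgue measure on it (of total mass $1/q!$). This is the content of Identity~\ref{identity1}: its ordered domain $0\le s_1\le\cdots\le s_q\le 1$ is one fundamental parametrization of $\Delta_q$, while Property~\ref{pty0} together with the permutation symmetry of divided differences reduces a general tuple $(x_0,\dots,x_q)$ to the trailing-zero form that appears there. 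I would emphasize that this representation is exactly what makes the real-part bound attainable: a naive triangle inequality applied to the power series of Property~\ref{pty2} only produces the weaker estimate $\e^{[|x_0|,\cdots,|x_q|]}$, since replacing each $x_j$ by $|x_j|$ overshoots $\Re(x_j)$.

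Granting the representation, the argument is short. By the triangle inequality for integrals,
\begin{equation}
\left|\e^{[x_0,\cdots,x_q]}\right|\leq \int_{\Delta_q}\left|\e^{\sum_j s_j x_j}\right|d\mu(s),
\end{equation}
and because every weight $s_j$ is real and nonnegative, $\bigl|\e^{\sum_j s_j x_j}\bigr|=\e^{\Re(\sum_j s_j x_j)}=\e^{\sum_j s_j\Re(x_j)}$ holds exactly. Hence the right-hand side equals $\int_{\Delta_q}\e^{\sum_j s_j\Re(x_j)}\,d\mu(s)$, which is precisely the same representation evaluated at the real inputs $\Re(x_0),\dots,\Re(x_q)$, yielding $\e^{[\Re(x_0),\cdots,\Re(x_q)]}$ and completing the proof. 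Permutation symmetry of the divided difference ensures the right-hand side is well defined even when the real parts are not ordered.

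I expect the one genuine step to be the first: promoting the particular ordered-simplex statement of Identity~\ref{identity1}, in which the last argument is pinned to $0$, to the symmetric representation for an arbitrary $(q{+}1)$-tuple. This requires a careful change of variables so that the exponent is exhibited as an honest convex combination $\sum_j s_j x_j$ integrated against a positive measure; the reductions via Property~\ref{pty0} and permutation symmetry dispose of the pinned zero and the ordering, respectively. Once the exponent has this form, no information about the complex phases of the $x_j$ survives the triangle inequality, and the nonnegativity of the $s_j$ is what converts each $x_j$ into exactly $\Re(x_j)$ rather than the coarser $|x_j|$ --- which is the crux of the estimate.
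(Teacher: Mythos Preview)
Your proof is correct and takes a genuinely different route from the paper. The paper proceeds by induction on $q$: using Property~\ref{pty0} to peel off one argument and Lemma~\ref{lemma1} (the identity $\int_0^1 a^q\,\e^{[ax_0,\ldots,ax_q]}\,da=\e^{[0,x_0,\ldots,x_q]}$) to reduce a $(q{+}2)$-point divided difference to an integral of $(q{+}1)$-point ones, then invoking the inductive hypothesis under the integral. Your argument instead goes straight to the symmetric Hermite--Genocchi simplex representation and applies the triangle inequality once. The change of variables you allude to --- from the ordered domain $0\le s_1\le\cdots\le s_q\le 1$ in Identity~\ref{identity1} to barycentric coordinates $t_j\ge 0$, $\sum_j t_j=1$ --- is the linear map $t_0=s_1$, $t_j=s_{j+1}-s_j$, $t_q=1-s_q$, which has unit Jacobian and indeed displays the exponent as $\sum_j t_j x_j$; combined with Property~\ref{pty0} this removes the pinned zero, as you say. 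What your approach buys is transparency: it makes immediately visible \emph{why} the bound involves $\Re(x_j)$ rather than $|x_j|$, namely that the integration weights form a convex combination. What the paper's approach buys is self-containment within the appendix: it never invokes Identity~\ref{identity1} (whose proof is deferred to an external reference) but relies only on Lemma~\ref{lemma1}, which is proved there directly from the power series of Property~\ref{pty2}.
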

\begin{proof}
We proceed by induction. Eq. (\ref{bound}) is trivially satisfied with the equality when $q=0$. For the case $q=1$, we have 
\begin{align}
\left| \e^{[x_0,x_1]}\right|&=\left| \e^{x_0}\right|\left| \e^{[0,x_1-x_0]}\right| \nonumber\\
&=\e^{\Re(x_0)} \left|\int_0^1 a \e^{a(x_1-x_0)}da\right| \nonumber\\
&\leq \e^{\Re(x_0)}\int_0^1 a \left| \e^{a(x_1-x_0)} \right|da \nonumber\\
&=\e^{\Re(x_0)}\int_0^1 a \e^{a\Re(x_1-x_0)} da \nonumber\\
&=\e^{[\Re(x_0),\Re(x_1)]},
\end{align}
where Lemma \ref{lemma1} is used. Assume that we have 
\begin{equation}
\left| \e^{[x_0,\cdots,x_q]}\right|\leq \e^{[\Re(x_0),\cdots,\Re(x_q)]}, \label{assmp1}
\end{equation}
which it is true for $q=0,1$. It follows that
\begin{align}
\left| \e^{[x_0,\cdots,x_q,x_{q+1}]}\right|&=\left| \e^{x_{q+1}} \right| \left| \e^{[0, x_0-x_{q+1},\cdots,x_q-x_{q+1}]}\right| \nonumber\\
&=\e^{\Re(x_{q+1})} \left| \int_0^1 a^q \e^{[a(x_0-x_{q+1}),\cdots,a(x_q-x_{q+1})]}da\right| \nonumber\\
&\leq \e^{\Re(x_{q+1})}  \int_0^1 a^q \left| \e^{[a(x_0-x_{q+1}),\cdots,a(x_q-x_{q+1})]}\right| da \nonumber\\
&\leq \e^{\Re(x_{q+1})} \int_0^1 a^q \e^{[\Re(ax_0-ax_{q+1}),\cdots,\Re(ax_q-ax_{q+1})]} da \nonumber\\
&=\e^{\Re(x_{q+1})} \e^{[0,\Re(x_0-x_{q+1}),\cdots,\Re(x_q-x_{q+1})]} \nonumber\\
&=\e^{[\Re(x_0),\cdots,\Re(x_q),\Re(x_{q+1})]},
\end{align}
where the second and the third equalities use Lemma \ref{lemma1} and the second inequality uses (\ref{assmp1}). This proves that the inequality holds for any number of complex inputs. 
\end{proof}

\section{Bounding $\left|\e^{\Delta t_w[x_1,x_2,\cdots, x_q,0]}\right|$}\label{forbounds}

For $\left|\e^{\Delta t_w[x_1,x_2,\cdots, x_q,0]}\right|$, we use the following theorem,  
\setcounter{thm}{0}
\begin{thm}
For any $q+1$ complex values $x_0,\cdots,x_q\in \mathbb{C}$, 
\begin{equation}
\left| {\rm e}^{[x_0,\cdots,x_q]}\right|\leq {\rm e}^{[\Re(x_0),\cdots,\Re(x_q)]}=\frac{{\rm e}^{\xi}}{q!},
\end{equation}
where $\Re(\cdot)$ denotes the real part of an input and $\xi\in \left[\min\{\Re(x_0),\cdots,\Re(x_q)\},\max\{\Re(x_0),\cdots,\Re(x_q)\}\right]$.
\end{thm}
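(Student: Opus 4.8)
The plan is to separate the claim into its two halves and handle each with a different tool. The inequality $\left|\e^{[x_0,\cdots,x_q]}\right|\leq \e^{[\Re(x_0),\cdots,\Re(x_q)]}$ is nothing other than Theorem~\ref{appendixthm1}, whose inductive proof (built on Property~\ref{pty0} and Lemma~\ref{lemma1}) establishes exactly this bound; I would simply invoke it, so the only genuinely new content lies in evaluating the right-hand side as $\e^{\xi}/q!$ with $\xi$ between the smallest and largest real parts of the inputs.

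For that evaluation I would work entirely with the real numbers $y_j\equiv\Re(x_j)$ and exploit the mean-value structure of divided differences. The cleanest route uses the Hermite-Genocchi integral representation already implicit in the exponential form of Identity~\ref{identity1} and its iterate Lemma~\ref{lemma1}: the real divided difference $\e^{[y_0,\cdots,y_q]}$ equals the integral of $\e$ evaluated at a point $\sum_j t_j y_j$ over a simplex of volume $1/q!$, with nonnegative weights $t_j$ summing to one. Since $\e^{x}$ is its own $q$th derivative, continuous, and strictly positive, I would apply the mean value theorem for integrals over the (connected, compact) simplex: the integral equals the integrand at some admissible point times the simplex volume, i.e. $\e^{\xi}/q!$ with $\xi=\sum_j t_j^{*} y_j$.

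The remaining and only delicate step is certifying that $\xi$ lies in $[\min_j y_j,\max_j y_j]$. This becomes immediate once the exponent is written in barycentric form: a summation-by-parts applied to the ordered-simplex exponent $\sum_l \lambda_l s_l$ (with $\lambda_l$ the successive differences of the $y_j$) rewrites it with coefficients $s_1$, the gaps $s_l-s_{l-1}$, and $1-s_q$, all nonnegative and summing to one, so $\xi$ is a genuine convex combination of the $y_j$ and hence falls in the closed interval they span. I expect this bookkeeping---reconciling the ordered-simplex exponent of Identity~\ref{identity1} with a manifestly convex combination of the points---to be the main obstacle. If one prefers to avoid it, the same conclusion follows by citing the standard divided-difference mean value theorem for $C^q$ functions (proved by applying Rolle's theorem $q$ times to the interpolation error $f-p$, whose $q$th derivative $f^{(q)}(\xi)=\e^{\xi}$ equals $q!$ times the leading interpolation coefficient $\e^{[y_0,\cdots,y_q]}$), as catalogued in Ref.~\cite{deBoor2005}. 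Either way, chaining the inequality from Theorem~\ref{appendixthm1} with this evaluation yields $\left|\e^{[x_0,\cdots,x_q]}\right|\leq \e^{[\Re(x_0),\cdots,\Re(x_q)]}=\e^{\xi}/q!$ as stated.
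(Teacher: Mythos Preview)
Your proposal is correct and essentially coincides with the paper's own treatment. The paper's proof of this statement consists of a pointer to Appendix~\ref{DDapendix}, where Theorem~\ref{appendixthm1} establishes the inequality $\left|\e^{[x_0,\cdots,x_q]}\right|\leq \e^{[\Re(x_0),\cdots,\Re(x_q)]}$ by exactly the inductive argument you invoke; the equality $\e^{[\Re(x_0),\cdots,\Re(x_q)]}=\e^{\xi}/q!$ is simply asserted there as a standard mean-value property of divided differences (in the spirit of Ref.~\cite{deBoor2005}) without further argument, so your Hermite--Genocchi plus integral mean-value sketch in fact supplies more detail than the paper does for that half. One minor remark: the barycentric bookkeeping you flag as the ``main obstacle'' is unnecessary once you write Hermite--Genocchi directly in the standard-simplex variables $(t_0,\ldots,t_q)$ with $t_j\geq 0$ and $\sum_j t_j=1$, since the exponent is already a convex combination of the $y_j$; the ordered-simplex form of Identity~\ref{identity1} is just a change of variables away from this.
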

This is proved in Appendix \ref{DDapendix}. From this, we have
\begin{equation}
 \left|\e^{\Delta t_w[x_1,\cdots, x_q,0]}\right|=({\Delta t_w})^q\left|\e^{[\Delta t_w x_1,\cdots,\Delta t_w x_q,0]}\right|\leq ({\Delta t_w})^q \e^{[\Delta t_w \Re(x_1),\cdots, \Delta t_w \Re(x_q),0]}. \label{tbound}
\end{equation} 
From the definition of $x_j$, we have
\begin{equation}
\forall j\in\{1,\cdots,q\},\ \ \ \ \  \Re(x_j)=\sum_{l=j}^q \Re\left(\lambda^{(k_l)}_{i_l,z_{\bold{i}_l}} \right)\leq (q-j+1)\lambda. 
\end{equation}
Based on the property that increasing any input in $\e^{[\cdot,\cdots,\cdot]}$ will only increase its value (can be proved by taking derivatives in the Hermite-Genocchi form), we have 
\begin{equation}
\left|\e^{\Delta t_w[x_1,\cdots, x_q,0]}\right|\leq ({\Delta t_w})^q \e^{[\Delta t_w \Re(x_1),\cdots, \Delta t_w \Re(x_q),0]}\leq ({\Delta t_w})^q \e^{[\Delta t_w q\lambda,\Delta t_w (q-1)\lambda,\cdots, \Delta t_w \lambda,0]}.
\end{equation}
Using the permutation symmetric property and Property \ref{pty0}, we have 
\begin{align}
({\Delta t_w})^q\e^{[\Delta t_w q\lambda,\Delta t_w (q-1)\lambda,\cdots, \Delta t_w \lambda,0]}&={\Delta t_w}^q\frac{\e^{[\Delta t_w q\lambda,\Delta t_w (q-1)\lambda,\cdots, \Delta t_w \lambda]}-\e^{[\Delta t_w (q-1)\lambda,\cdots, \Delta t_w \lambda,0]}}{\Delta t_w \lambda q} \nonumber\\
&=({\Delta t_w})^q\frac{\e^{\Delta t_w \lambda}-1}{\Delta t_w \lambda q}\e^{[\Delta t_w (q-1)\lambda,\cdots, \Delta t_w \lambda,0]} =\cdots= \left(\frac{\e^{\lambda \Delta t_w}-1}{\lambda}\right)^q\frac{1}{q!}.
\end{align}
Therefore, we have
\begin{equation}
\left|\e^{\Delta t_w[x_1,x_2,\cdots, x_q,0]}\right|\leq \frac{1}{q!} \left(\frac{\e^{\lambda \Delta t_w}-1}{\lambda}\right)^q.
\end{equation}

\section{LCU method review}\label{LCUrecap}
We give a brief introduction to the LCU method in this section, and we adapt the original paper~\cite{Berry2015}'s notations for a more convenient reference to readers. Suppose we have a unitary $U$, which is an infinite sum of unitaries, i.e.,
\begin{equation}
U=\sum_{j=0}^{\infty}\beta_jV_j, \label{U_expand}
\end{equation}
where $\beta_j>0$ and $V_j$ are some unitaries. A truncated series, up to order $m-1$, yields an operator 
\begin{equation}
\tilde{U}=\sum_{j=0}^{m-1}\beta_jV_j, 
\end{equation}
which approaches $U$ as $m$ increases. We perform the following procedure to effectively implement $\tilde{U}$ on a state $|\psi\rangle$ embedded in a larger system. Prepare an $m$-dimensional ancilla $|0\rangle$ and implement a unitary $B$ such that
\begin{equation}
B|0\rangle=\frac{1}{\sqrt{s}}\sum_{j=0}^{m-1}\sqrt{\beta_j}|j\rangle,
\end{equation}
where $s=\sum_{j-0}^{m-1}\beta_j$. Suppose we have access to a control unitary $V_c$ such that for each $j$,
\begin{equation}
V_c|j\rangle|\psi\rangle=|j\rangle V_j|\psi\rangle.
\end{equation}
Consider the following combination of the above operations
\begin{equation}
W\equiv\left(B^{\dagger}\otimes I\right)V_c\left( B\otimes I\right).
\end{equation}
We have
\begin{equation}
W|0\rangle|\psi\rangle=\frac{1}{s}|0\rangle \tilde{U}|\psi\rangle+\sqrt{1-\frac{1}{s^2}}|\Phi\rangle,
\end{equation}
where $|\Phi\rangle$'s ancillary part is orthogonal to $|0\rangle\langle0|$. Let us denote $P\equiv|0\rangle\langle0|\otimes I$ the orthogonal projection onto that subspace and $R\equiv I-2P$ the reflection operator with respect to $P$. It is shown that the sequence of operations $A\equiv -WRW^{\dagger}RW$,
acting on the total system is $A|0\rangle|\psi\rangle=|0\rangle \tilde{U}|\psi\rangle$ when $\tilde{U}$ is unitary and $s=2$. This procedure is the so-called Oblivious Amplitude Amplification (OAA). However, $\tilde{U}$ is in general not unitary because it is a truncated series of $U$. This nonunitarity can be accounted for when $\tilde{U}\approx U$ and $s\approx2$. More specifically, it is shown that if $||U-\tilde{U}||=\mathcal{O}(\delta)$ and $|s-2|=\mathcal{O}(\delta)$, then 
\begin{equation}
\left|\left| PA|0\rangle|\psi\rangle-|0\rangle U|\psi\rangle\right|\right|=\mathcal{O}(\delta).
\end{equation}
This means when $\tilde{U}$ is $\delta$-close to $U$ and $s$ is $\delta$-close to 2, the effect of the operator $A$ on the whole system is $\delta$-close to only $U$ acting on $|\psi\rangle$. 

Note that the condition $||U-\tilde{U}||=\mathcal{O}(\delta)$ can be satisfied when the truncation order $m$ is high enough. However, the condition $|s-2|=\mathcal{O}(\delta)$ is satisfied only when $\beta_j$ are specifically chosen. By construction, we require $s=\sum_{j=0}^{m-1}\beta_j$. If we choose $\beta_j=(\text{ln}2)^j/j!$, then 
\begin{equation}
s=\sum_{j=0}^{m-1}\frac{(\text{ln}2)^j}{j!}
\end{equation}
becomes a truncated Taylor expansion of 2, i.e., $2=\e^{\text{ln}2}$. In fact, it can be shown that the required truncation order $m$ such that $|s-2|=\mathcal{O}(\delta)$ scales like $\log(1/\delta)/\log(\log(1/\delta))$. With this $m$, it also guarantees that $||U-\tilde{U}||=\mathcal{O}(\delta)$, because  
\begin{equation}
\left|\left|U-\tilde{U}\right|\right|=\left|\left|\sum_{j=m}^{\infty}\frac{(\text{ln}2)^j}{j!}V_j\right|\right|\leq \sum_{j=m}^{\infty}\frac{(\text{ln}2)^j}{j!}=|2-s|.
\end{equation}

In summary, performing $A$ on an extended system $|0\rangle|\psi\rangle$, with $\beta_j=(\text{ln}2)^j/j!$ and $m=\mathcal{O}(\log(1/\delta)/\log\log(1/\delta))$, effectively performs $U$ on $|\psi\rangle$ with $\mathcal{O}(\delta)$ accuracy.

\section{An alternative approach for the LCU setup}\label{AltLCU}
We provide an alternative procedure for the LCU routine that leads to an exponential saving for the state preparation. Let us define 
\begin{equation}
\Gamma\equiv \max_{\forall k,i}||D^{(k)}_i||_{\max}.
\end{equation}
Re-evaluate the coefficients in Eq.~(\ref{U_t_dt}) using the $\Gamma$ above, we have
\begin{align}
\left| \e^{t_w x_1} \e^{\Delta t_w[x_1,x_2,\cdots, x_q,0]} d^{(\bold{k}_q)}_{\bold{i}_q,z}\right|=\frac{\left(\Gamma\widetilde{\Delta t}_w \e^{ t_w \lambda}\right)^q}{q!}\cos\left[\phi^{(\bold{k}_q)}_{\bold{i}_q,z}\right] \e^{i \theta^{(\bold{k}_q)}_{\bold{i}_q,z}} .
\end{align}
The evolution operator from $t_w$ to $t_w+\Delta t_w$ becomes
\begin{align}
&U_I(t_w+\Delta t_w, t_w)=\sum_z U_I(t_w+\Delta t_w, t_w)|z\rangle\langle z|  \nonumber\\
&=\sum_z\sum_{q=0}^{\infty}\sum_{\bold{i}_q}\sum_{\bold{k}_q}(-i)^q \frac{\left(\Gamma\widetilde{\Delta t}_w \e^{ t_w \lambda}  \right)^q }{2q!}\left( \e^{i \phi^{(\bold{k}_q)}_{\bold{i}_q,z}+i \theta^{(\bold{k}_q)}_{\bold{i}_q,z}}+\e^{-i \phi^{(\bold{k}_q)}_{\bold{i}_q,z}+i \theta^{(\bold{k}_q)}_{\bold{i}_q,z}}\right) P_{\bold{i}_q} |z\rangle\langle z| \nonumber\\
&=\sum_{q=0}^{\infty}\frac{\left(\Gamma\widetilde{\Delta t}_w \e^{ t_w \lambda}\right)^q }{2q!}\sum_{\bold{i}_q}\sum_{\bold{k}_q}\sum_{x=\pm}(-i)^q  P_{\bold{i}_q} \Phi^{(\bold{k}_q,w)}_{\bold{i}_q,x}. 
\end{align}
The required state $|\psi_0\rangle$ for LCU becomes
\begin{equation}
|\psi_0\rangle=\frac{1}{\sqrt{s} } \displaystyle\sum_{q=0}^{Q}\sqrt{\frac{\left(\Gamma\widetilde{\Delta t}_w \e^{ t_w \lambda}  \right)^q}{2q!}}\sum_{\bold{i}_q}\sum_{\bold{k}_q}\sum_{x=0,1}|\bold{i}_q\rangle|\bold{k}_q\rangle|x\rangle, \label{newpsi0}
\end{equation}
where $s$ is the normalization factor, i.e.,
\begin{equation}
s=\displaystyle\sum_{q=0}^{Q}\frac{\left(MK\Gamma\widetilde{\Delta t}_w \e^{ t_w \lambda}  \right)^q}{q!}.
\end{equation}
To prepare the state (\ref{newpsi0}), we first prepare a state in the following form,
\begin{equation}
\frac{1}{\sqrt{s} } \displaystyle\sum_{q=0}^{Q}\sqrt{\frac{\left(MK\Gamma\widetilde{\Delta t}_w \e^{ t_w \lambda} \right)^q}{q!}} |1\rangle^{\otimes q}|0\rangle^{\otimes (Q-q)}|1\rangle^{\otimes q}|0\rangle^{\otimes (Q-q)}. \label{newmidstate}
\end{equation}
Subsequently, for each $|1\rangle$ in the first $Q$ registers ($\bold{i}_q$ part), we transform it to $(1/\sqrt{M})\sum_{i=0}^{M}|i\rangle$, and for each $|1\rangle$ in the later $Q$ registers ($\bold{k}_q$ part), we transform it to $(1/\sqrt{K})\sum_{k=1}^K|k\rangle$. The state (\ref{newmidstate}) becomes
\begin{equation}
\frac{1}{\sqrt{s}} \displaystyle\sum_{q=0}^{Q}\sqrt{\frac{\left(MK\Gamma\widetilde{\Delta t}_w \e^{ t_w \lambda} \right)^q}{q!}}\sum_{\bold{i}_q} \frac{1}{\sqrt{M^q}}|\bold{i}_q\rangle\sum_{\bold{k}_q}\frac{1}{\sqrt{K^q}}|\bold{k}_q\rangle=\frac{1}{\sqrt{s} } \displaystyle\sum_{q=0}^{Q}\sum_{\bold{i}_q}\sum_{\bold{k}_q}\sqrt{\frac{\left(\Gamma\widetilde{\Delta t}_w \e^{ t_w \lambda}  \right)^q}{q!}}|\bold{i}_q\rangle|\bold{k}_q\rangle, 
\end{equation}
which is the required  $|\psi_0\rangle$ in (\ref{newpsi0}), when combined with $|x\rangle$. Note that since the transformations $|1\rangle\to(1/\sqrt{M})\sum_{i=0}^{M}|i\rangle$ and $|1\rangle\to(1/\sqrt{K})\sum_{k=1}^K|k\rangle$ are mappings to the equally distributed state, they can be done with a column of parallel Hadamard gates, which has a gate cost $\mathcal{O}(\log(MK))$. This provides an exponential saving comparing to $\mathcal{O}(MK)$ given in the main context. This saving can be apparent when $MK$ becomes large. However, this can create an overhead in the required number of repetitions. Indeed, we have $\Gamma \e^{t_w\lambda }=MK\max_{\forall k,i}|| D^{(k)}_i ||_{\max}\e^{t_w\lambda }$ here comparing to $\Gamma(t_w)=\sum_{i}\sum_k|| D^{(k)}_i ||_{\max}\e^{t_w\lambda_{(i,k)}}$ in the main context, and the overall simulation cost monotonically increases with this quantity. If only a few $||D^{(k)}_i||_{\max}\e^{t_w\lambda_{(i,k)}}$ is much larger than the others such that $MK\max_{\forall k,i}|| D^{(k)}_i ||_{\max}\e^{t_w\lambda}\gg\sum_{i}\sum_k|| D^{(k)}_i ||_{\max}\e^{t_w\lambda_{(i,k)}}$, then the method provided in the main context is preferred. Depending on the models, one may favors one over the other.

\end{document}